\DeclareMathOperator*{\argmax}{arg\,max}
\algnewcommand{\algorithmicgoto}{\textbf{go to}}%
\algnewcommand{\Goto}[1]{\algorithmicgoto~\ref{#1}}%
\newcommand{\Er}{\mathbf{E}}
\newcommand{\Ac}{\ensuremath{\mathcal{A}}}
\newcommand{\Gc}{\ensuremath{\mathcal{G}}}
\newcommand{\Nc}{\ensuremath{\mathcal{N}}}
\newcommand{\Uc}{\ensuremath{\mathcal{U}}}
\newcommand{\Rbb}{\ensuremath{\mathbb{R}}}
\newcommand{\Ibb}{\ensuremath{\mathbb{I}}}
\newcommand{\lb}{\ensuremath{\left(}}
\newcommand{\rb}{\ensuremath{\right)}}
\newcommand{\lbr}{\ensuremath{\left\{}}
\newcommand{\rbr}{\ensuremath{\right\}}}
\newcommand{\lsb}{\ensuremath{\left[}}
\newcommand{\rsb}{\ensuremath{\right]}}
\newtheorem{theorem}{Theorem}
\newtheorem{lemma}{Lemma}
\newtheorem{definition}{Definition}
\title[AAMAS-2023 Formatting Instructions]{Social Optimum Equilibrium Selection for Distributed Multi-Agent Optimization}
\author{Duong Nguyen}
\affiliation{
  \institution{The University of Adelaide}
  \city{Adelaide}
  \country{Australia}}
\email{duong.nguyen@adelaide.edu.au}
\author{Langford White}
\affiliation{
  \institution{The University of Adelaide}
  \city{Adelaide}
  \country{Australia}}
\email{langford.white@adelaide.edu.au}
\author{Hung Nguyen}
\affiliation{
  \institution{The University of Adelaide}
  \city{Adelaide}
  \country{Australia}}
\email{hung.nguyen@adelaide.edu.au}
\begin{abstract}
We study the open question of how players learn to play a social optimum pure-strategy Nash equilibrium (PSNE) through repeated interactions in general-sum coordination games. A social optimum of a game is the stable Pareto-optimal state that provides a maximum return in the sum of all players' payoffs (social welfare) and always exists. We consider finite repeated games where each player only has access to its own utility (or payoff) function but is able to exchange information with other players. We develop a novel regret matching (RM) based algorithm for computing an efficient PSNE solution that could approach a desired Pareto-optimal outcome yielding the highest social welfare among all the attainable equilibria in the long run. Our proposed learning procedure follows the regret minimization framework but extends it in three major ways: (1) agents use global, instead of local,  utility for calculating regrets, (2) each agent maintains a small and diminishing exploration probability in order to explore various PSNEs, and (3) agents stay with the actions that achieve the best global utility thus far, regardless of regrets. We prove that these three extensions enable the algorithm to select the stable social optimum equilibrium instead of converging to an arbitrary or cyclic equilibrium as in the conventional RM approach. We demonstrate the effectiveness of our approach through a set of applications in multi-agent distributed control, including a large-scale resource allocation game and a hard combinatorial task assignment problem for which no efficient (polynomial) solution exists. 
\end{abstract}
\keywords{Equilibrium Selection, Social Optimum, Multi-agent Optimization}
\newcommand{\BibTeX}{\rm B\kern-.05em{\sc i\kern-.025em b}\kern-.08em\TeX}
\begin{document}

%%% The following commands remove the headers in your paper. For final 
%%% papers, these will be inserted during the pagination process.

\pagestyle{fancy}
\fancyhead{}

%%% The next command prints the information defined in the preamble.

\maketitle 

%%%%%%%%%%%%%%%%%%%%%%%%%%%%%%%%%%%%%%%%%%%%%%%%%%%%%%%%%%%%%%%%%%%%%%%%

\section{Introduction}

The ability to coordinate large and complex multi-agent systems (MAS) to achieve a system-level goal is a major engineering challenge. In particular, problems of scaling, observability, and constraints on computational resources make it impractical to apply centralized control methods~\cite{konda2021mission}. Game theoretic  multi-agent decision-making has recently been proposed as a  distributed paradigm to tackle many of the associated challenges~\cite{paccagnan2022utility}. The key questions in these game-theoretic distributed control systems are the game-based strategies to ensure the collective behavior of all agents converges to a stable optimum equilibrium with respect to a performance metric of interest, such as a social optimum state that provides maximum return in the sum of all players' utilities (social welfare). Thus, the utility functions of the players and the global utility have to be designed in order that even if the players are self-interested (i.e., they maximize their private, local utility functions), they may indirectly cooperate to achieve the global objective. One desired outcome in these systems is to achieve a Pareto optimal equilibrium, where no player can benefit from unilaterally moving away from this outcome, as it will necessarily result in a loss to someone else~\cite{konda2021mission}. %In other words, a Pareto-optimal outcome cannot be improved upon without disadvantaging at least one player and thus it is a natural goal for a MAS to drive agents to.

Most current game-theoretic distributed control solutions focus on achieving a Nash equilibrium (NE) play~\cite{paccagnan2022utility}, but not the Pareto optimal solutions needed in distributed MAS systems. The main reason is that it is difficult to achieve mutual cooperation due to the lack of individual control in large games.  Unfortunately, finding a NE or even an approximate NE in a multiplayer game is computationally intractable~\cite{berg2017exclusion}. Computing an exact desired equilibrium among the set of approachable Nash equilibria is even harder~\cite{barman2015finding}. %These algorithms cannot be used for finding Pareto optimal solutions in MAS - our focus in this paper.

Our solution builds on the general principles of regret matching (RM)~\cite{hart2000simple}. RM-based algorithms are known to be efficient algorithms for finding equilibria in large games~\cite{zhang2022equilibrium} with nice properties such as  guaranteed convergence to the set of coarse correlated equilibria (CCE) in finite games~\cite{hart2000simple}, and pure Nash equilibria in socially concave games~\cite{hart2015markets}. In some categories of games, such as potential games or weakly acyclic games, RM can be shown to converge to the NE~\cite{marden2007regret, saran2012regret}. Despite their many positive properties, current RM algorithms, however, are not designed to efficiently find a stable Pareto optimal solution~\cite{zhang2022equilibrium}.

\noindent\textbf{Our Contributions.} In this paper, we develop a new RM-based algorithm
%the standard Regret Matching~\cite{hart2000simple}, which is known as the state-of-the-art distributed approach for computing equilibria in large-scale games~\cite{feng2021convergence}, 
to compute social Pareto optimal equilibria in large-scale distributed control games for MAS. We make three major modifications to the standard RM in order to guarantee convergence and efficiency.
%and propose several ways to address its shortcomings and improve the efficiency of convergence points. %Different from the original approach, which originally proposed for solving non-cooperative games, our focus is on the class of coordination games that is directly associated with multi-agent distributed optimization problems. More specifically, our work suggest that RM based framework can be applied for playing a repeated coordination game and obtain a stable social optimum outcome by designing an appropriate utility function. 
More specifically,
\begin{itemize}
    \item We develop a new RM-based algorithm for finding stable and Pareto-optimal Nash equilibria in general-sum coordination games. Our algorithm can approximate a social optimum equilibrium
    %, which is the stable Pareto-optimal outcome that yields the maximal return in the sum of all players' individual utilities. 
   by using a global utility (i.e, the sum of all players' local utilities) instead of local utilities, in computing the players' regrets for each player. With this utility function, the game is a finite potential game where a stable set of pure-strategy Nash equilibria is guaranteed to exist~\cite{monderer1996potential}. 
    \item We prove a general result that the empirical distribution of joint play is guaranteed to converge to an optimal pure-strategy Nash equilibrium (PSNE) in games where each player uses a local utility function that is appropriately aligned with the global objective, i.e., such that maximizing the local utilities corresponds to maximizing the global utility. We further show  that a joint strategy that maximizes the global utility is also a desired Pareto-optimal PSNE outcome for which there is no other game outcome in which every player is better off. %, which can not be improved further in term of maximizing collective utility of all players.
    \item We introduce an adaptive monitoring threshold to speed up and control the convergence point of the learning procedure. We demonstrate that our algorithm not only successfully approaches a social optimum outcome but also empirically converges faster compared to the standard RM method~\cite{hart2000simple}. %than a state-of-the-art regret minimization based method~\cite{}.
\end{itemize}

%%%%%%%%%%%%%%%%%%%%%%%%%%%%%%%%%%%%%%%%%%%%%%%%%%%%%%%%%%%%%%%%%%%%%%%%

\section{Background and Related Works}
\subsection{Game-Theoretic Distributed Control}

The game-theoretic approach to distributed optimization in multi-agent systems generally requires: (1) assigning local utility functions for players that align with the global objective of the system, and (2) designing an iterative decision-making rule to learn an optimal game stage outcome. Many distributed optimization tasks, such as distributed resource allocation where players without a central correlation mechanism can cooperate to achieve a globally optimal resource distribution, can be modeled as potential games~\cite{marden2015game} under this framework. The search for an optimal solution in these approaches can be seen as a task of finding NE of the game. Most of the learning algorithms for potential games guarantee convergence to a pure Nash equilibrium~\cite{marden2009cooperative}. 

In these methods, each player strives to optimize its local utility function, which is associated with the global objective function. The local utility function of each player often not only depends on its action but also the actions chosen by other players. Thus, the coordination problem can be interpreted as a distributed optimization problem, in which iterative game-based learning algorithms can be used to search for the NE of the coordination game. Efficient tractable algorithms for finding the optimal NEs in these games are still an open question~\cite{li2022role}. 

\subsection{The Problem of Computing Arbitrary NE}
Although there are proven methods to find a NE outcome in repeated normal-form games, such as Fictitious Play~\cite{brown1951iterative} or tree search-based methods~\cite{berg2017exclusion}, if there exist multiple Nash equilibria, there is no prescription for deciding which one a player should play~\cite{feng2021convergence}. Thus, if players choose their strategies from different equilibria, then the result is not necessarily an equilibrium solution. Therefore, the critical thing in these types of situations is determining mechanisms that allow players to coordinate on one specific equilibrium among all attainable equilibria. This is a challenge yet to be addressed in multi-agent learning in games~\cite{raducha2022coordination}. 

Another major issue of reaching an arbitrary NE is that, in general, it does not correspond to an optimal outcome~\cite{vetta2002nash,roughgarden2015intrinsic}. That is, for a given game, even when payoffs are better if players coordinate on one specific equilibrium, using existing  algorithms players  tend to choose  options leading to an equilibrium that is sub-optimal. This issue is related to a research thread within algorithmic game theory that focuses on analyzing the inefficiency of Nash equilibria according to two worst-case
measures, the price of anarchy and the price of stability~\cite{koutsoupias1999worst}. Note however that these studies focus on analysis and not synthesis of game-based algorithms.

%===================================
\subsection{Finding Pareto Optimal Solution}
%\HN{Talk about different approaches to find Pareto optimal point. Why RM is a leading candidate? Why do you want to build on RM here?}
One feasible way to achieve a Pareto optimal equilibrium is to use the linear programming (LP) method~\cite{berri2016correlated}. However, computational complexity that grows exponentially with the number of players and actions makes this approach infeasible for large-size problems. An alternative option is to use a tree-search based method, such as~\cite{berg2017exclusion}. Again, the run time that grows rapidly with the game size is the main drawback of applying this method in practice. There is currently no feasible algorithm for reaching Pareto optimal equilibrium for games with large state space~\cite{chandan2021tractable}.

\subsection{Our Key Innovation} 
Marden \emph{et al}~\cite{marden2009cooperative} demonstrated that one could use a regret minimization based method to reach a pure Nash solution for the cooperative control problem in a specific class of game setting, i.e., potential games and its generalized form known as weakly acyclic games. The work~\cite{nguyen2019adaptive} demonstrated that by designing a suitable local utility function that takes into account all agents' behaviors, a regret-based algorithm can also be used to obtain cooperative behaviors among the agents. The most recent work~\cite{zhang2022equilibrium} proposed an extended RM aiming to compute CE in general-sum games that offer higher overall social welfare compared to equilibria found by the original RM approach. 

None of these works investigate equilibrium selection techniques that can achieve a specific outcome. Equilibrium selection, which is to compute an exact equilibrium that achieves a non-trivial approximation to the optimal welfare, is hard~\cite{barman2015finding,raducha2022coordination}. Our main innovation in this paper is a novel equilibrium selection method that achieves a Pareto-optimal solution for coordination games.%\footnote{Coordination games model situations in which all players can obtain mutually optimal outcomes, but only by playing corresponding strategies.}. %Our method works by blah blah \HN{we need to spell out how we we solve the problem at the high-level}. %(also known as common-interest games)

%%%%%%%%%%%%%%%%%%%%%%%%%%%%%%%%%%%%%%%%%%%%%%%%%%%%%%%%%%%%%%%%%%%%%%%%

\section{Problem Definition}
A (finite) repeated normal-form game can be defined as a tuple $\Gc = \big(\Nc,\{\Ac_i\}_{i\in\Nc},\{u_i\}_{i\in\Nc}\big)$, where $\Nc = \{1, 2, \dots, |\Nc|\}$\footnote{$|\cdot|$ denotes the cardinality of a set.} is a finite  set of players, $\Ac_i = \{a_1, a_2, \dots, a_{|\Ac_i|}\}$ is the set of pure actions for each player $i$, and $u_i: \Ac \rightarrow \mathbb{R}$ specifies a payoff function for player $i\in \Nc$, where $\Ac = \prod_{j\in\Nc} \Ac_j$ is the set of joint pure actions for all players. Player $i$ decides its action according to a strategy $\pi_i\in \Delta(\Ac_i)$\footnote{$\Delta(\cdot)$ denotes the probability distribution over a set.}, which can be a pure strategy (choosing to take one action with prob. of $1$) or a mixed strategy (selecting an action according to a prob. distribution over the set of its pure actions).    

Let $\pi = (\pi_i,\pi_{-i}) \in \Delta(\Ac)$ be a strategy profile, a set consisting of one strategy for each player, where player $i$ follows its strategy $\pi_i \in \Delta(\Ac_i)$ and the remaining players follow their strategy $\pi_{-i}  \in \Delta(\Ac_{-i})$\footnote{$\Ac_{-i} = \Ac \setminus \Ac_i$ is the action set of all players excluding $i$.}. Then we can calculate the expected payoff of player $i$ as a function of $\pi$ as follows:
\begin{equation} \label{eq:payoff}
    u_i(\pi_i,\pi_{-i}) = \sum\nolimits_{a_i\in \Ac_i}\ \sum\nolimits_{a_{-i}\in \Ac_{-i}}\ \pi(a_i,a_{-i})\ u_i(a_i,a_{-i})\ ,
\end{equation}
where $(a_i,a_{-i})$ denotes a pure action profile in which player $i$ chooses action $a^{\prime}_i$ and the other remaining players select $a_{-i}$; and $\pi(a_i,a_{-i})=\pi_i(a_i)\ \pi_{-i}(a_{-i})$ is the probability that the action profile $(a_i,a_{-i})$ is played. %under the strategy profile $\pi$. 

\subsection{Equilibrium and Optimality Concepts}
An equilibrium solution to the game $\Gc$ is a specific strategy profile $\pi^* = (\pi_i^*,\pi_{-i}^*)$ that leaves all players no incentive to deviate unilaterally. In non-cooperative games, NE is well-known core concept defined as when there is a strategy profile such that no player can increase its expected payoff by deviating from its current strategy, given that the other players do not change their strategies.  

\begin{definition}[Nash Equilibrium]\label{def:PSNE}
A mixed strategy profile $\pi^* \in \Delta(\Ac)$ is a mixed NE if for all $i\in \Nc$ and all $\pi_i \in \Ac_i$ %\neq \pi_i$
% $$u_i(\pi^*_i,\pi^*_{-i}) = \max_{\pi_i \in \Delta(\Ac_i)} u_i(\pi_i,\pi^*_{-i}) \ .$$
\begin{equation} \label{eq:NE}
    u_i(\pi^*_i,\pi^*_{-i}) \geq u_i(\pi_i,\pi^*_{-i})\ .
\end{equation}
A pure-strategy Nash equilibrium (PSNE) is an action profile $a^*=(a^*_i,a^*_{-i}) \in \Ac$ that satisfies the same conditions
\begin{equation} \label{eq:PSNE}
u_i(a^*_i,a^*_{-i}) \geq u_i(a_i,a^*_{-i}) \ .
\end{equation}
\end{definition}

It is difficult to compute a NE, and a NE solution often does not provide high expected payoffs to the players. Coarse correlated equilibrium, introduced by Aumann~\cite{aumann1987correlated}, is an optimality concept that generalizes NE and is easier to compute than NE. Under a CCE, no player can benefit on average by playing any fixed action. It can be seen that the CCE distributions that are independent across players are precisely the NE of the game.   

\begin{definition}[Coarse Correlated Equilibrium]\label{def:CCE}
A mixed strategy profile $\pi^* \in \Delta(\Ac)$ is a CCE if $\forall i\in \Nc$ and $\forall a^{\prime}_i\in\Ac_i$%, it holds that
\begin{equation} \label{eq:CCE}
u_i(\pi^*_i,\pi^*_{-i}) \geq \sum\nolimits_{a_i\in \Ac_i} \sum\nolimits_{a_{-i}\in \Ac_{-i}} \pi^*(a_i,a_{-i})\ u_i(a^{\prime}_i,a_{-i}) \ . 
\end{equation}
\end{definition}

At a CCE, all players are willing to commit to follow the CCE strategy given that all the others also choose to commit. That is, if a single player decides not to commit to follow the CCE strategy, it experiences a lower expected payoff.

For optimality, the global performance of an action profile can be measured through a global objective (also known as social welfare) function $W: \Ac\rightarrow \mathbb{R}$, defined as 
$$W(a_i,a_{-i}) = \sum\nolimits_{i\in\Nc} u_i(a_i,a_{-i})\ .$$ 
An action profile that maximizes the social welfare is said to be Pareto-optimal since it is impossible to increase the local payoff of a player without decreasing the payoff of another.
\begin{definition}[Pareto Optimality]
A joint action profile $(a^*_i,a^*_{-i})$ is Pareto-optimal if $\forall i\in\Nc$ and $\forall (a_i,a_{-i})\in\Ac$
\begin{equation} \label{eq:Pareto}
    W(a^*_i,a^*_{-i}) = \max_{(a_i,a_{-i})\in\Ac}\ \sum\nolimits_{i\in\Nc} u_i(a_i,a_{-i})\ .
\end{equation}
\end{definition}
In words, an action profile is Pareto-optimal if there is no other profile that would increase payoff of any player without reducing payoff of at least one player. Pareto optimality provides an indication of whether or not an outcome is good from the perspective of social welfare. Pareto optimality can also extends to distribution on action profiles. A strategy profile is Pareto-optimal if there is no other distribution that provides one agent with strictly greater expected payoff without giving another agent strictly less.  

\subsection{Examples of A Social Optimum Outcome}
%For a game with a single NE, it is reasonable to predict that players will play the strategies in this equilibrium. Most natural games, however, often have more than one NE, and thus it becomes difficult to predict how players will behave in the game and how one unique NE can be selected. %(i.e., coordination games)

We use a simple $2\times2$ resource allocation game in wireless networks to illustrate the concepts introduced so far in this paper. In this example, there are two players wishing to access one of two wireless service resources. The set of actions is denote by $\Ac = \{(j,k):j,k = R\#1,R\#2\}$ where $(j,k)$ means that player 1 chooses resource $j$ and player 2 chooses resource $k$. Each player has a different physical (PHY) rate on each resource (measured as Mbps as illustrated on the link from a player to a resource). The PHY rate of a player is the maximum obtainable throughput when it connected alone to a resource. When they both connect to the same resource, they have to share the resource and thus obtain a lower throughput compared to their maximum physical rates. Practical implementation of these schemes using is currently available~\cite{7348945}. The utility (payoff) functions are the real throughput observed by each player. The game outcomes for every possible joint action combination is summarized in Figure~\ref{fig:2x2resoucegame}.

\begin{figure}[!h]
 \centering
 \includegraphics[width=0.68\linewidth]{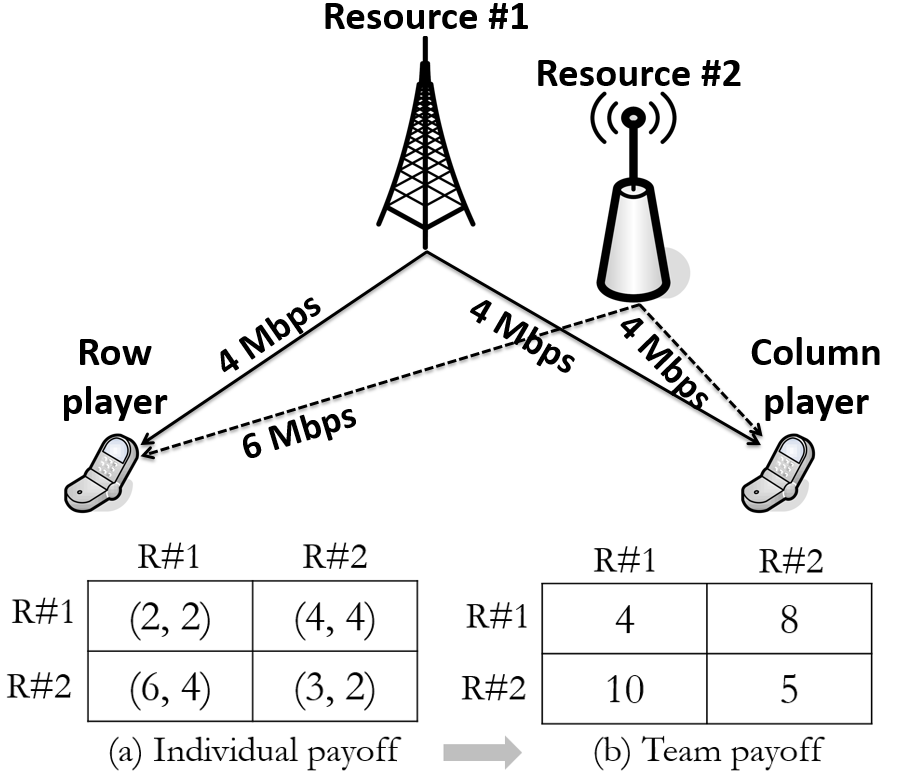}
 \caption{A simple $2\times2$ resource selection game.}
 \label{fig:2x2resoucegame}
\end{figure}

As each player separately optimizes its objective function based on the local view, the solution achieved might not be the most optimal one from the network's point of view. In this game, only one action combination is socially optimal, which is when the row player selects R\#2 and the column player selects R\#1, however there exist two possible NE outcomes: (R\#1, R\#2) and (R\#2, R\#1). With the later outcome, the row and column players respectively receive utilities of $6$ and $4$ Mbps, which yields a largest total network throughput of $10$ Mbps for the game. In this case, we cannot switch to any other outcome and make at least one party better off. %without making anyone worse off. 

Note that although (R\#2, R\#1) is the best solution for all players, given their local views, neither of them would unilaterally deviate from the suboptimal NE (R\#1, R\#2) since its individual utility would drop to a lower value. Because of that, players can often get stuck in an inferior equilibrium. In order to get out of that suboptimal point, they may need an outside intervention that provides a strong incentive for all to change their behavior. The incentive mechanism can just be a payoff-sharing scheme that rewards all participants with a same joint utility, which motivates all to act as part of a group. When considering joint team utilities, the utility matrix for a game changes, as shown on the right table in Fig.~\ref{fig:2x2resoucegame}. Thus, if the players could jointly agree on what to do, then it makes sense for them to choose a payoff-maximizing strategy at the (R\#2, R\#1) solution. 

%%%%%%%%%%%%%%%%%%%%%%%%%%%%%%%%%%%%%%%%%%%%%%%%%%%%%%%%%%%%%%%%%%%%%%%%

\section{Proposed Algorithm}

Our algorithm is a variation of external RM~\cite{hart2000simple} and relies on the following two assumptions: 

\textbf{\emph{Assumption 1:}} All players have access to their own utility functions. At every time $t$, every players $i$ is able to compute its own payoff, i.e., $u^i(a^i_t,a^{-i}_t)$, and the expected payoff it would have obtained if it had played any other action $k \neq a^i_t$, i.e., $u^i(k,a^{-i}_t), \forall k\in \Ac_i$, given that the other players do not change their decisions.

\textbf{\emph{Assumption 2:}} Every player shares extra information with other players to help them make better decisions and steer the collective behavior of all players to a desired state. Specifically, at every decision time step $t$, given the current action profile $(a_i^t,a_{-i}^t)$, every player $i$ computes a message consisting of its expected local payoff vector of all possible actions that a different player $j\neq i$ may take, defined as $\left\{\tilde{u}_i(a_i^t,a_{-i}^t|{a_j=k})\right\}_{k\in{\Ac_j}}$, and sends that message to the others (one message for each different player). 

We present a high-level structure of the learning technique to reach a desired social optimum solution in Algorithm~\ref{alg:algorithm}. The major differences between this proposed cooperative approach and a non-cooperative approach as in the standard RM are as follows:

(a) Firstly, we propose to use the global utility instead of the local utility to update players' regrets. This design makes sure the algorithm can find a stable PSNE since at least one PSNE outcome is guaranteed to exist. 

(b) Secondly, we add a small probability of a random action to ensure adequate exploration. Each player decides its next action with a high probability based on regret minimization, but keep exploring other options randomly with uniform probability, by maintaining a small exploration rate ($\frac{\delta}{t^\gamma} \frac{1}{|\Ac_i|}$), which decreases sufficiently slowly towards zero over time. This essentially allows players to keep searching for better outcomes, and avoid getting stuck in a local optima when all the regrets of choosing different actions vanish (i.e., converging to an arbitrary sub-optimal NE).   

(c) Thirdly, every player keeps track of its highest obtainable global utility over time and synchronizes this information across all players. This parameter is used as a global satisfaction threshold for all players so that whenever a player's current decision satisfies this threshold, it is forced to stay with this decision regardless of its regrets. This helps players to rapidly locate and verify whether a specific point in the joint action set is a PSNE outcome. Thus, it may speed up the convergence time of computing a PSNE.    

%--------------------------------------------------------------------------------
\begin{algorithm}[!h]
\DontPrintSemicolon
% \KwInitialise {$\pi^i_1(k)\leftarrow \frac{1}{|\Ac_i|},\ \forall k\in\Ac_i$}
\KwInput {The game $\Gc = \big(\Nc,\{\Ac_i\}_{i\in\Nc},\{u_i\}_{i\in\Nc}\big)$}
\KwOutput {Pareto-optimal strategy $(\pi^*_i,\pi^*_{-i}) \in \Ac$}
Initialise:\ {$U \leftarrow 0;\ \omega \leftarrow 0 $}\\ %\mbox{and}\ R_i^0(k) \leftarrow 0,
\While{$(\omega < 1)$} %\label{whileLoop} %\tcp*{this is a comment}
{
    $U \leftarrow U + \Delta U$\\
    Set:\ {$\begin{cases} \pi_i^1(k) \leftarrow 1/|\Ac_i| \\ R_i^1(k) \leftarrow 0 \end{cases}, \forall i\in\Ac_i\ \mbox{and}\ \forall k\in\Ac_i$}\\
    \For{$t \in \{1, 2, \dots, T\}$}
    {
      \For{$i\in\Nc$}
      {
          Sample and exchanges $a_i^t \sim \pi_i^t$ \\
          Compute local payoff $u_i(a_i^t,a_{-i}^t)$ \\
          Compute and send to all players $j\neq i$\\ $\lbr \tilde{u}_i(a_i^t,a_{-i}^t)\big|{a_j=k})\rbr_{k\in{\Ac_j}}$\\
          Compute expected global payoff $\forall k\in \Ac_i$\\ $u^g(k,a_{-i}^t) = \sum_{j\in\Ac} u_j(a_j^t,a_{-j}^t)\big|a_i=k $ \\
          \If{$\displaystyle\max_{i\in\Nc} \lbr \displaystyle\max_{k\in\Ac_i} \lbr u^g(k,a_{-i}^t) \rbr \rbr > U$}
          { $U \leftarrow \displaystyle\max_{i\in\Nc} \lbr \displaystyle\max_{k\in\Ac_i} \lbr u_i^g(k,a_{-i}^t) \rbr \rbr$}
          
          Compute cumulative regret $\forall k\in \Ac_i$\\ 
          $R_i^t(k) \leftarrow \displaystyle\frac{1}{t} \sum\nolimits_{\tau = 1}^t \Big[ u^g(k,a^{-i}_t) - u^g(a^i_t,a^{-i}_t)  \Big]$\\
          Update action selection strategy $\forall k\in\Ac_i$\\
          \uIf{$u_i^g(a^i_t,a^{-i}_t)\geq U$}
          { $\pi_i^{t+1}(k) \leftarrow \Ibb_{\lbr k=a_i^t \rbr}$ }          
          \uElseIf{$\sum_{k} \big[R^i_t(k)\big]^+=0$}
          { $\pi_i^{t+1}(k) \leftarrow 1/|\Ac_i|$ } %\displaystyle\frac{1}{|\Ac_i|}$ }
          \Else
          { $\pi_i^{t+1}(k) \leftarrow \lb 1-\displaystyle\frac{\delta}{t^\gamma} \rb \displaystyle\frac{\big[R^i_t(k)\big]^+}{\sum\nolimits_{k'} \big[R^i_t(k')\big]^+} + \frac{\delta}{t^\gamma} \frac{1}{|\Ac_i|}$ }
      }
      Update $\omega \leftarrow u^g(a_i^t,a_{-i}^t)/U$\\
      \If{$\omega < 1$} 
      { $(\pi_i^*,\pi_{-i}^*) \leftarrow (\pi_i^{t+1},\pi_{-i}^{t+1})$\\
      \textbf{go to} 2}
      \Else 
      { $(\pi_i^*,\pi_{-i}^*) \leftarrow (\pi_i^{t+1},\pi_{-i}^{t+1})$ }
    }
}
\Return{$(\pi_i^*,\pi_{-i}^*)$}\;
\caption{Social-Optimum Finding Algorithm} \label{alg:algorithm}
\end{algorithm}
% \Ibb_{\lbr k=a_i^t \rbr} + \Ibb_{\lbr k\neq a_i^t \rbr}
% { $\pi_i^{t+1}(k) \leftarrow \lb 1-\displaystyle\frac{\delta}{t^\gamma} \rb \Ibb_{\lbr k=a_i^t \rbr} + \displaystyle\frac{\delta}{t^\gamma}
%--------------------------------------------------------------------------------  
In more detail, our algorithm works in two main phases: (1) finding a feasible solution, and then (2) iteratively improving it until an optimal solution is found, as follows: 
% https://www.cs.cmu.edu/~./nilanjan/pubs/conference/noam_socialcom12.pdf

\textbf{\emph{Phase 1:}} Each player computes its vector of expected global utilities for all possible actions based on the information exchanged from other players. %The global utility is defined as the total sum of the individual utilities obtained by all players.
All players then update their regrets using the estimated global utilities instead of the local utilities. This means that a player can improve its own utility by unilateral action if and only if this unilateral action also improves the global utility. Setting all players' individual utilities to the global utility, with $u_i(a) = u^g(a)$, means that $\forall i\in\Nc, \forall a_i^{\prime}, a_i^{\prime\prime}\in\Ac_i,  \mbox{ and } \forall a_{-i}\in\Ac_{-i}$
\begin{multline*}
u_i(a_i^{\prime},a_{-i}) - u_i(a_i^{\prime\prime},a_{-i})>0\ 
\Leftrightarrow\ u^g(a_i^{\prime},a_{-i}) - u^g(a_i^{\prime\prime},a_{-i})>0\ .
\end{multline*}
This condition coincides with the definition of ordinal potential games~\cite{monderer1996potential}, with the potential function $\Phi(\cdot) \equiv u^g(\cdot)$, and hence the player utilities are aligned with the global objective. This leads to a desired result that the game would have at least one PSNE outcome as proved later in Lemma~\ref{Existence_PSNE}. This alignment also guarantees that the social optimum action profile $a^* \in \argmax_{a\in\Ac} u^g(a)$ (i.e., an action profile that maximises the global utility) are always included in the PSNE set, which is obvious to verify.  

\textbf{\emph{Phase 2:}} Each player monitors an experienced global utility using a synchronized parameter $U$. The main idea is letting each player $i$, at each time $t$, track and synchronize the highest possible global utility. Whenever $u_i^g(\cdot)\geq U$, the player $i$ will stay with its current action decision $a^i_t$ with probability of $1$, regardless of its regret. Once a pure action profile is reached, the monitored threshold $U$ is incremented by a step size $\Delta U$. Subsequently, the principal loop is performed, where an external RM algorithm is used to obtain a feasible convergence point. At this stage, a global satisfaction indicator $\omega$ is used to verify whether the social optimum is reached. This process is repeated until $\omega = 1$, which implies that the converged action profile $(a_i^*,a_{-i}^*)$ is the social-optimum PSNE. %, either a Nash or non-Nash point,

\subsection{Main Results}
%In the following, we show that our proposed algorithm can guarantee to reach an social optimal PSNE solution in a repeated coordination game. 
The following two lemmas are used to prove our main theorem. %\footnote{Due to space limits, we move the proofs to the appendix.} 

% \noindent\textbf{Existence of Pure Strategy Nash Equilibrium (PSNE)}
\begin{lemma}\label{Existence_PSNE}
A finite coordination game will always have at least one PSNE, if maximizing players' local utilities corresponds to maximizing the global objective, i.e., the players' local utility functions satisfy, $\forall a_i, a^{\prime}_i\in\Ac_i,\ \forall a_{-i}\in\Ac_{-i},\ \forall i\in \Nc\ ,$
\begin{multline} \label{eq:potential_function}
% u_i(a_i,a_{-i})-u_i(a^{\prime}_i,a_{-i})\\ 
% = w_i \Big( \Phi(a_i,a_{-i})-\Phi(a^{\prime}_i,a_{-i})\Big) ,
u_i(a_i,a_{-i})-u_i(a^{\prime}_i,a_{-i}) > 0\ 
\Rightarrow\ \Phi(a_i,a_{-i})-\Phi(a^{\prime}_i,a_{-i}) >0\ ,
\end{multline}
where $\Phi(\cdot)$ is a function that represents the global objective. %and $w=\{w\}_{i\in\Nc}$ is a positive vector of weights.
\end{lemma}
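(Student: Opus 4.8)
The plan is to recognize that hypothesis \eqref{eq:potential_function} is exactly the statement that $\Gc$ is an \emph{ordinal potential game} with potential function $\Phi$, and then to produce an explicit PSNE as a global maximizer of $\Phi$. First I would note that since $\Gc$ is finite, the joint action set $\Ac = \prod_{j\in\Nc}\Ac_j$ is finite and nonempty, so the real-valued map $a \mapsto \Phi(a)$ attains its maximum on $\Ac$. Fix any
\[
a^* = (a_i^*, a_{-i}^*) \in \argmax_{a\in\Ac}\ \Phi(a)\ .
\]

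The core step is to show $a^*$ is a PSNE, by contradiction. Suppose it is not; then by Definition~\ref{def:PSNE} there exist a player $i\in\Nc$ and an action $a_i'\in\Ac_i$ with $u_i(a_i', a_{-i}^*) > u_i(a_i^*, a_{-i}^*)$, i.e.\ a strictly profitable unilateral deviation for player $i$ against the opponent profile $a_{-i}^*$. Instantiating \eqref{eq:potential_function} with this $i$, with $a_i := a_i'$, $a_i' := a_i^*$, and opponent profile $a_{-i} := a_{-i}^*$, the profitability of the deviation forces $\Phi(a_i', a_{-i}^*) > \Phi(a_i^*, a_{-i}^*)$, contradicting the maximality of $\Phi$ at $a^*$. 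Hence no player has a profitable unilateral deviation at $a^*$, so $a^*$ satisfies \eqref{eq:PSNE} for every $i\in\Nc$ and is a PSNE; in particular the PSNE set is nonempty.

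I expect the argument to be short, so the main thing to be careful about is the logical direction of the ordinal-potential implication: it must be used as ``a strict gain in $u_i$ $\Rightarrow$ a strict gain in $\Phi$'', which is precisely the direction asserted in \eqref{eq:potential_function}, and not its converse (which would not suffice). As an alternative route worth a remark, one can invoke the \emph{finite improvement property}: every strict single-player improvement strictly increases $\Phi$, and since $\Phi$ takes only finitely many values on the finite set $\Ac$, any sequence of such improvements must terminate, necessarily at a profile with no remaining profitable deviation, i.e.\ a PSNE. Finally, since the choice $u_i(\cdot) \equiv u^g(\cdot)$ for all $i$ used in Phase~1 of Algorithm~\ref{alg:algorithm} trivially satisfies \eqref{eq:potential_function} with $\Phi \equiv u^g$, the lemma applies to the game actually played by the agents, and moreover any $a^* \in \argmax_{a\in\Ac} u^g(a)$ is one of the guaranteed PSNEs.
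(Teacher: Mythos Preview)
Your proposal is correct and follows essentially the same approach as the paper: pick a global maximizer $a^*$ of $\Phi$ over the finite joint action set and verify that it is a PSNE via the ordinal-potential implication~\eqref{eq:potential_function}. If anything, your write-up is cleaner about using only the stated direction of~\eqref{eq:potential_function} (via contradiction/contrapositive), whereas the paper's proof displays a biconditional that was not assumed; your added remarks on the finite improvement property and the instantiation $u_i\equiv u^g$ are correct and consistent with the paper's framing.
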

%\footnote{In other words, an improvement in utility of any player for its two different actions always results in an improvement in the global utility, given that other players' actions are unchanged.}

\begin{proof}
See Appendix A.1 in the supplemental material.
\end{proof}

\begin{lemma}\label{Regret_Convergence}
If a player $i$ updates its decisions according to Algorithm~\ref{alg:algorithm}, then its time-average regret with respect to every $a\in\Ac^i$ vanishes as $T\rightarrow \infty$. When all players use the same algorithm, the empirical distribution of the joint actions converges to a set of CCE.
\end{lemma}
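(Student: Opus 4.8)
The plan is to reuse Hart and Mas-Colell's Blackwell-approachability proof of regret matching \cite{hart2000simple}, treating the two novel ingredients of Algorithm~\ref{alg:algorithm} --- the vanishing uniform exploration $\tfrac{\delta}{t^\gamma}\tfrac1{|\Ac_i|}$ and the threshold-based ``freeze'' rule $\pi^{t+1}_i(k)\leftarrow\mathbbm{1}_{\{k=a^t_i\}}$ when $u^g(a^t_i,a^t_{-i})\ge U$ --- as controlled perturbations of the classical recursion. Fix a player $i$, write $r_\tau(k)=u^g(k,a^\tau_{-i})-u^g(a^\tau_i,a^\tau_{-i})$ for the instantaneous external regret of not having played $k$ at time $\tau$, so that the algorithm's quantity is the time-average $R^t_i(k)=\tfrac1t\sum_{\tau=1}^t r_\tau(k)$, and set $\rho^t=\sum_{k\in\Ac_i}\big([R^t_i(k)]^+\big)^2$, the squared Euclidean distance of the regret vector from the non-positive orthant. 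Showing $\rho^t\to 0$ almost surely is exactly the assertion that $\limsup_t R^t_i(k)\le 0$ for every $k\in\Ac_i$, i.e.\ that every time-average regret vanishes.

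First I would write the one-step expansion of $\rho^t$. From $R^t_i=\tfrac{t-1}{t}R^{t-1}_i+\tfrac1t r_t$, convexity and $1$-Lipschitzness of $[\cdot]^+$, and $M:=\max_a|u^g(a)|<\infty$, one obtains a recursion $t^2\rho^t\le (t-1)^2\rho^{t-1}+2(t-1)\big\langle[R^{t-1}_i]^+,\,r_t\big\rangle+C_1$ with $C_1=C_1(M,|\Ac_i|)$. Summing over $t$ and dividing by $t^2$ reduces the whole statement to controlling $\tfrac1{t^2}\sum_{\tau\le t}\tau\big\langle[R^{\tau-1}_i]^+,r_\tau\big\rangle$. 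The difference between this realized inner product and its conditional expectation given the history $\Fc_{\tau-1}$ is a bounded martingale-difference sequence; it is handled verbatim as in \cite{hart2000simple} (or by an Azuma / Robbins--Siegmund argument) and contributes nothing in the limit, both in expectation and almost surely. So the crux is to bound the conditional drift $\mathbb{E}\big[\langle[R^{\tau-1}_i]^+,r_\tau\rangle\mid\Fc_{\tau-1}\big]$ according to which branch of the update rule produced $\pi^\tau_i$.

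There are three cases. (i) If $\pi^\tau_i$ came from the regret-matching branch, the Hart--Mas-Colell identity gives that the conditional expectation is \emph{exactly} zero for the unperturbed law $\pi^\tau_i(k)\propto[R^{\tau-1}_i(k)]^+$, since $\sum_{k}[R^{\tau-1}_i(k)]^+\,u^g(k,a_{-i})$ is cancelled by $\big(\sum_k[R^{\tau-1}_i(k)]^+\big)\sum_{k'}\pi^\tau_i(k')u^g(k',a_{-i})$ for every fixed $a_{-i}$; mixing in the uniform law with weight $\tfrac{\delta}{(\tau-1)^\gamma}$ perturbs this by at most $\tfrac{\delta}{(\tau-1)^\gamma}C_2$, and since $\sum_{\tau\le t}\tau\cdot\tau^{-\gamma}=O(t^{2-\gamma})=o(t^2)$ for any $\gamma\in(0,1)$ (``sufficiently slow'' decay), this washes out after division by $t^2$. (ii) If $\pi^\tau_i$ is the uniform law because $\sum_k[R^{\tau-1}_i(k)]^+=0$, then $[R^{\tau-1}_i]^+=0$ and the inner product is zero. (iii) The freeze branch is the delicate case. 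Here the key structural observation is that the trigger $u^g(a^t)\ge U$ depends on the realized joint action only through the \emph{common} global utility $u^g$, so it fires for all players at once or for none; moreover within a single execution of the inner $t$-loop $U$ is non-decreasing and takes only finitely many values (it is only ever reset by the monitoring step to one of the finitely many numbers $\max_{j,k}u^g(k,a_{-j})$), so it stabilises at a value $U^\infty$ after an almost-surely finite time $\tau^*$. For every $\tau>\tau^*$ the monitoring step does not fire, so $\max_k u^g(k,a^\tau_{-i})\le U^\infty$; and once the freeze branch is active at some $t>\tau^*$ the entire profile is frozen from $t$ on with $u^g(a^t)\ge U^\infty$, hence for all subsequent $\tau$, $u^g(a^\tau_i,a^\tau_{-i})=u^g(a^t)\ge U^\infty\ge\max_k u^g(k,a^\tau_{-i})$, i.e.\ $r_\tau(k)\le 0$ for every $k$. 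Thus $\langle[R^{\tau-1}_i]^+,r_\tau\rangle\le 0$ automatically on freeze steps after $\tau^*$ (a non-negative vector against a non-positive one), while the finitely many steps $\tau\le\tau^*$ add at most $(\tau^*)^2\rho^{\tau^*}+\sum_{\tau\le\tau^*}2\tau C_3$ to $t^2\rho^t$, which vanishes after dividing by $t^2$. Assembling the three cases yields $\rho^t\to 0$, which is the first claim. For the second claim, let $z^T=\tfrac1T\sum_{\tau\le T}\mathbbm{1}_{\{a^\tau\}}\in\Delta(\Ac)$ be the empirical joint distribution; then $\mathbb{E}_{a\sim z^T}[u^g(k,a_{-i})]-\mathbb{E}_{a\sim z^T}[u^g(a)]=R^T_i(k)$, and since this has nonpositive $\limsup$ for every $i$ and every $k\in\Ac_i$, $z^T$ approaches the closed convex set of coarse correlated equilibria of the common-interest game with payoff $u^g$ used by the algorithm.

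I expect the main obstacle to be the rigorous bookkeeping of the freeze branch: one must pin down, against the exact order of operations in Algorithm~\ref{alg:algorithm}, that (a) the trigger is genuinely synchronised across players because $u^g$ is common, (b) $U$ stabilises within a single inner loop, and (c) after stabilisation a frozen profile stays frozen and is one-step globally optimal at $U^\infty$, so that freeze steps carry nonpositive instantaneous regret and cannot spoil approachability. By contrast the exploration perturbation is routine --- it only needs $\gamma\in(0,1)$ so that $\sum_{\tau\le t}\tau^{1-\gamma}=o(t^2)$ --- and the martingale/noise control is the standard part of the argument in \cite{hart2000simple}.
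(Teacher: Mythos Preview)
Your proposal is correct and takes a genuinely different route from the paper's own proof. The paper works in continuous time: it defines the Lyapunov potential $P(\bar R)=\tfrac12\sum_k([\bar R(k)]^+)^2$, computes its time-derivative along the mean-field ODE $\tfrac{d}{dt}\bar R(k)=\Er_\pi[R(k)]-\bar R(k)$, substitutes the regret-matching-plus-exploration law for $\pi^i$, and shows $\tfrac{d}{dt}P(\bar R)\le -P(\bar R)$ for $\delta_n$ sufficiently small, whence exponential decay of $P$; the CCE conclusion is then read off directly from the definition. You instead stay in discrete time and run the original Hart--Mas-Colell Blackwell-approachability recursion for $\rho^t=\sum_k([R^t_i(k)]^+)^2$, splitting the one-step conditional drift according to the three branches of the update rule and controlling the martingale noise separately. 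The main substantive difference is that the paper's argument \emph{never addresses the freeze branch} (line~19 of Algorithm~\ref{alg:algorithm}) --- it analyses only the regret-matching-plus-exploration update --- whereas you handle it explicitly. In fact your stabilisation argument for case~(iii) can be shortened: because the monitoring step (lines~13--14) runs \emph{before} the freeze check within the same iteration and already enforces $U\ge\max_{j\in\Nc}\max_{k\in\Ac_j}u^g(k,a^t_{-j})$, the freeze condition $u^g(a^t)\ge U$ immediately forces $r_t(k)\le 0$ for every $k$ and every player at every step where it fires, so $\langle[R^{t-1}_i]^+,r_t\rangle\le 0$ without any appeal to the eventual stabilisation of $U$. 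Your treatment of the exploration perturbation is also more explicit about the role of $\gamma$ (the paper simply asserts one can take $\delta_n$ ``small enough''). In short, the paper's ODE/Lyapunov route is terser but heuristic --- no stochastic-approximation link between the discrete recursion and the ODE is supplied --- and silent on the freeze rule; your discrete-time case analysis is longer but covers all branches of the algorithm and makes the growth condition on the exploration rate transparent.
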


\begin{proof}
    For simplicity of notation, we drop the subscript $i$ and define the Lyapunov function: \footnote{$\mbox{dist}(x,\Ac) = \min\{\|x-a\|:a\in\Ac\}$, where $\|\cdot\|$ is the Euclidean norm.}
	\begin{equation}
	\label{eq:Lyapunov}
	P(\bar{R}_t) = \frac{1}{2}\left(\mbox{dist}[\bar{R}_t,\Rbb^-]\right)^2 = \frac{1}{2}\sum\nolimits_{k}\left(|\bar{R}_t(k)|^+\right)^2 \ .
	\end{equation}
	Taking the time-derivative of~\eqref{eq:Lyapunov} yields
	\begin{equation}
	\label{eq:Pderivative}
	\frac{d}{dt}P(\bar{R})=\sum\nolimits_{k}|\bar{R}(k)|^+ \times \frac{d}{dt}\bar{R}(k) \ .
	\end{equation}
	First, we find $d\bar{R}(\theta)/dt$ as follows
	\begin{align*}
	& \frac{d}{dt}\bar{R}(k) = \Er_{\pi} \lbr R(k) - \bar{R}(k) \rbr \nonumber\\
	& = \Er_{\pi} \lbr \lsb u^g(k,a^{-i})-u^g(a) \rsb - \bar{R}(k) \rbr \nonumber\\
	& = \sum\nolimits_{a\in\Ac}\pi(a) \lsb u^g(k,a^{-i}) - u^g(a) \rsb - \bar{R}(k) \nonumber\\
	& = \sum\nolimits_{a^{-i}\in\Ac^{-i}} \pi^{-i}(a^{-i})\ u^g(k,a^{-i}) -\sum\nolimits_{a\in\Ac} \pi(a)\ u^g(a) - \bar{R}(k) \nonumber\\
	& =  \lsb u^g(k,\pi^{-i})-u^g(\pi) \rsb - \bar{R}(k) \ . %
	\end{align*}	
	Next, substituting $d\bar{R}(k)/dt$ into~\eqref{eq:Pderivative}, we obtain
	\begin{align}
	\frac{d}{dt}P(\bar{R}) = & \sum\nolimits_{k}|\bar{R}(k)|^+ \times \lsb u^g(k,\pi^{-i})-u^g(\pi) \rsb \nonumber\\ 
	& - \sum\nolimits_{k}|\bar{R}(k)|^+ \times \bar{R}(k) \ .
	\label{eq:Pderivative2}
	\end{align}
	
	Also, recall that the action selection strategy of the player is defined based on its regret function 
	$$\pi^i(k)=(1-\delta_n)\ \frac{|\bar{R}(k)|^+}{\sum_{k}|\bar{R}(k)|^+} + \frac{\delta_n}{m} \ .$$
	
	Thus, 
	$$ |\bar{R}(k)|^+ = \lsb \lb \frac{1}{1-\delta_n} \rb \pi^i(k) -\frac{\delta_n}{(1-\delta_n)m} \rsb \sum_{k}|\bar{R}(k)|^+ \ . $$
	
	Substitute $|\bar{R}(k)|^+$ into the first term on the r.h.s of~\eqref{eq:Pderivative2}:
	\begin{align}
	& \sum\nolimits_{k}|\bar{R}(k)|^+ \times \lsb u^g(k,\pi^{-i})-u^g(\pi) \rsb \nonumber \\
	& = \sum\nolimits_{k} \lsb \lb \frac{1}{1-\delta_n} \rb \pi^i(k) -\frac{\delta_n}{(1-\delta_n)m} \rsb \sum\nolimits_{k}|\bar{R}(k)|^+ \nonumber\\ 
	& \ \ \ \times \lsb u^g(k,\pi^{-i})-u^g(\pi) \rsb \nonumber \\
	& = \lb \frac{1}{1-\delta_n} \rb \sum\nolimits_{k}|\bar{R}(k)|^+ \sum\nolimits_k \pi^i(k) \lsb u^g(k,\pi^{-i})-u^g(\pi) \rsb \nonumber\\
	& \ \ \ - \frac{\delta_n}{(1-\delta_n)m}\ \sum\nolimits_{k}|\bar{R}(k)|^+ \lsb u^g(k,\pi^{-i})-u^g(\pi) \rsb \nonumber\\
	& \leq \frac{2M\delta_n}{(1-\delta_n)m}\ \sum\nolimits_{k}|\bar{R}(k)|^+ \ .
	\label{eq:firstterm}
	\end{align}
	Note that in the last line, the following results are used:
	\begin{align}
	& (1)\ \sum\nolimits_k \pi^i(k) \lsb u^g(k,\pi^{-i})-u^g(\pi) \rsb=0 \ \mbox{\big(linearity of $u^g(\cdot)$\big)}\ . \nonumber\\
	& (2)\ \lsb u^g(k,\pi^{-i})-u^g(\pi) \rsb \leq 2M \ \mbox{\big($u^g(\cdot)$ is bounded by $M$\big)}\ . \nonumber
	\end{align}
	
	Now consider the last term on the r.h.s of~\eqref{eq:Pderivative2}
	\begin{align}
	\sum\nolimits_{k}|\bar{R}(k)|^+ \times \bar{R}(k)=\sum\nolimits_{k}\left(|\bar{R}(k)|^+\right)^2 \ . %=2P(\bar{R})
	\label{eq:secondterm}
	\end{align}
	
	Therefore, combining~\eqref{eq:Pderivative2},~\eqref{eq:firstterm} and~\eqref{eq:secondterm}, we obtain
	$$\frac{d}{dt}P(\bar{R}) \leq \sum\nolimits_{k}|\bar{R}(k)|^+ \lb \frac{2M\delta_n}{(1-\delta_n)m}\ - \sum\nolimits_{k}|\bar{R}(k)|^+ \rb \ .$$ 
	
	Assuming $\sum\nolimits_{k}|\bar{R}(k)|^+ \geq \epsilon > 0$, one can choose $\delta_n > 0$ small enough such that:
	$$\frac{d}{dt}P(\bar{R}) \leq -\frac{1}{2} \sum\nolimits_{k} \lb |\bar{R}(k)|^+ \rb^2 = P(\bar{R}) \ ,$$ 
	by substituting $\sum\nolimits_{k}\big(|\bar{R}(k)|^+\big)^2 = 2 P(\bar{R})$ from~\eqref{eq:Lyapunov}.
	
	Consequently, 
	$P\big(\bar{R}(t)\big)= P(\bar{R}(0))\ \exp(-t) \ .$
	This implies that $P\big(\bar{R}(t)\big)$ approaches zero at an exponential rate. This proves approachability of the player' regrets to the set $\Rbb^-$ (i.e. all regrets approach zero).
	
	It follows that, if all players use the same proposed procedure, we obtain the convergence of the empirical distribution of the joint actions of all players to approach the {\it Coarse Correlated Equilibrium set} of correlated actions yielding non-positive rewards. The result is immediate from the definition of CCE as in~\eqref{eq:CCE}. 
    On any convergent subsequence $\displaystyle\lim_{t \rightarrow \infty} \pi_t \rightarrow \Pi$,
    $$\displaystyle\lim_{t \rightarrow \infty} \Big[R(k,\pi_{-i}^t)\Big] = \sum\nolimits_{a \in \Ac} \Pi(a^t) \lb u^g(k,a^t_{-i}) - u^g(a^t) \rb \leq 0 \ , $$
    where  $R(k,\pi_{-i}^t)$ is the $k$-th component of the player's time-average regret vector. \footnote{The regret that a player would have by playing action $k$ with probability one, instead of the (generally randomised) action specified by it strategy.} Comparing with the definition of the CCE as defined in~\eqref{eq:CCE}, the desired result follows.
\end{proof}

%http://www2.hawaii.edu/~gurdal/JDSMC07.pdf

We now state and prove our main result.
\begin{theorem}\label{thrm_socialOptimum}
If all players update their decisions according to Algorithm~\ref{alg:algorithm}, then the resulting action profile is guaranteed to converge to a social optimum PSNE solution.
\end{theorem}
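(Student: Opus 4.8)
Throughout, ``convergence'' is meant almost surely over the players' randomisation. The idea is to make a social optimum the unique (up to ties) absorbing state of the induced dynamics, using the three modifications of Algorithm~\ref{alg:algorithm} in turn. Since in Phase~1 every player's regret is computed with the common utility $u^g$, the induced game is an ordinal potential game with potential $\Phi\equiv u^g$; by Lemma~\ref{Existence_PSNE} it has a PSNE, and every $a^{*}\in\Ac^{*}:=\argmax_{a\in\Ac}u^g(a)$ is a PSNE and is Pareto-optimal, as already observed. Put $U^{*}:=\max_{a\in\Ac}u^g(a)$. The first thing I would record are two facts about the update rule: (i) $U$ is non-decreasing, and because the threshold update scans $\max_{i}\max_{k}u^g(k,a_{-i}^{t})$, we have $U=U^{*}$ as soon as the current profile equals some $a^{*}$ (indeed, as soon as it lies within one unilateral deviation of such an $a^{*}$); (ii) once $U=U^{*}$, the branch $u_i^g(a^{t})\ge U$ holds for player $i$ exactly when $u^g(a^{t})=U^{*}$, i.e.\ when $a^{t}\in\Ac^{*}$, in which case \emph{every} player is forced onto a pure strategy with no exploration, so $\Ac^{*}$ is absorbing; conversely, at any $a\notin\Ac^{*}$ no player is frozen, and each still explores with probability at least $\tfrac{\delta}{t^\gamma}\tfrac{1}{|\Ac_i|}$.

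\noindent\textbf{Exploration reaches the optimum.}
Next I would show that $\Ac^{*}$ is reached in finite time almost surely. Choosing the exponent so that $\gamma|\Nc|\le 1$ gives $\sum_{t\ge1}(\delta/t^\gamma)^{|\Nc|}=\infty$. Whenever the process is not already absorbed in $\Ac^{*}$, every player is in the exploring branch, so the conditional probability — given the history up to time $t$ — that the next joint action equals a fixed $a^{*}\in\Ac^{*}$ is at least $\prod_{i\in\Nc}\tfrac{\delta}{t^\gamma|\Ac_i|}=c\,t^{-\gamma|\Nc|}$ with $c>0$. These conditional probabilities sum to $+\infty$, so the extended (conditional, L\'evy-type) Borel--Cantelli lemma gives that $a^{*}$ is played at some finite time almost surely; by fact~(i) then $U=U^{*}$ from then on, and by fact~(ii) the realised joint action is absorbed in $\Ac^{*}$ from that time on. Hence the play is eventually constant, equal to some $a^{*}\in\Ac^{*}$, and $\omega=u^g(a^{*})/U^{*}=1$, so Algorithm~\ref{alg:algorithm} terminates and returns a strategy concentrated on $a^{*}$, which by Lemma~\ref{Existence_PSNE} and the utility alignment is a PSNE and is Pareto-optimal.

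\noindent\textbf{Closing the gap, and the main obstacle.}
The remaining point — and the technically heaviest one — is that before the previous step fires, the ``stay'' rule could in principle freeze everyone at a PSNE $\hat a$ of the potential game with $u^g(\hat a)<U^{*}$ (when $U$ has only risen to $u^g(\hat a)$), which would create a spurious absorbing state. Here the outer loop is used: each time the inner regret-matching phase settles — by Lemma~\ref{Regret_Convergence}, vanishing time-average regret drives the empirical play to the CCE set, which in this potential game, together with the stay rule, concentrates on a pure profile — on some $\hat a$ with $\omega=u^g(\hat a)/U<1$, the threshold is ratcheted up by $\Delta U$, the stay condition at $\hat a$ fails, and a fresh phase with renewed exploration is launched; this can recur only finitely often before $U=U^{*}$, after which the argument above applies. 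I expect the main obstacle to be precisely this interplay between the stay rule and exploration: one must simultaneously (a) tune $\gamma$ so that $\sum_t(\delta/t^\gamma)^{|\Nc|}=\infty$ and maintain the per-step lower bound for reaching $\Ac^{*}$ off the absorbing set, so the conditional Borel--Cantelli step is legitimate, and (b) prove that the number of outer threshold increments is finite and that each inner phase genuinely collapses onto a pure profile, i.e.\ make the coupling between the inner regret-matching convergence (Lemma~\ref{Regret_Convergence}) and the outer $U$-schedule rigorous.
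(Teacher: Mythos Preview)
Your route is genuinely different from the paper's and is in several respects more self-contained. The paper argues in two steps. For Step~1 (reaching a PSNE) it does \emph{not} use the potential-game/absorbing-state picture you sketch; instead it imposes the additional hypothesis that the game is \emph{socially concave} in the sense of Even-Dar, Mansour and Nadav (i.e.\ $\sum_i\lambda_i u_i(a)$ concave in $a$ and each $u_i(a_i,a_{-i})$ convex in $a_{-i}$), and then shows that if $a$ is a CCE (delivered by Lemma~\ref{Regret_Convergence}) the expectation $\bar a=\Er_\pi[a]$ is a PSNE. For Step~2 (social optimality) the paper proceeds by an informal case split on $U<M$, $U=M$, $U>M$, arguing in words that the stay rule plus the randomised branch and the outer $\Delta U$-increment will eventually surface the optimum; no explicit Borel--Cantelli or absorption argument is given.

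What each approach buys: the paper's Step~1 leverages Lemma~\ref{Regret_Convergence} directly and recycles a known CCE$\Rightarrow$PSNE mechanism, but at the cost of an extra structural assumption (social concavity) that is not in the theorem statement and that may fail for the discrete games actually treated in the experiments. Your approach avoids that assumption by working with the absorbing structure induced by the stay rule and the uniform exploration floor, and it makes the role of the exploration rate quantitative via the constraint $\gamma|\Nc|\le1$; on the other hand, you need this tuning hypothesis on $\gamma$, and you must handle carefully (as you note) the coupling between the inner RM phase, the stay rule, and the outer $U$-schedule. In short, your decomposition ``$\Ac^*$ is absorbing $+$ exploration hits $\Ac^*$ a.s.\ $+$ the outer loop dissolves spurious freezes'' is a different, more probabilistic proof than the paper's ``socially concave $\Rightarrow$ CCE averages to PSNE $+$ case analysis on $U$''; both leave essentially the same soft spot, namely the rigorous treatment of how the outer ratchet interacts with the inner convergence, which the paper handles only heuristically.
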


\begin{proof}
% We divide our approach into two particular cases:
% (1) If all players have access to the global utility function (the total sum of the players' local utilities)
Converging to a social optimum PSNE in our method requires a combination of two conditions: (1) the algorithm converges to a PSNE, and (2) a value of $\omega = 1$ is reached. The later condition necessarily guarantees that the players have played to reach an upper bound $M$ of the game objective function, which is the maximum of the global utility function $u^g(\cdot)$. We now proceed to prove whenever $u^g(a_i^*,a_{-i}^*) = M$ is reached, using the proposed algorithm will lead to the convergence to a social-optimum PSNE. %result in all the players' unconditional regrets vanish, which

% The proof of Lemma~\ref{Regret_Convergence} (Appendix B) has shown that all the time-average regrets of a player implementing Algorithm~\ref{alg:algorithm} will vanish, regardless of the behaviour of other players. 

The proof is constructed in the following two steps:

\textbf{Step 1 (Convergence toward a PSNE):} We first prove that if all players play according to Algorithm~\ref{alg:algorithm}, then the time-average strategy converges to a PSNE.  %The proof is shown in Appendix~\ref{Regret_Convergence}. 

The proof of Lemma~\ref{Regret_Convergence} (see Appendix B) has shown that if all players use the same proposed procedure in Algorithm~\ref{alg:algorithm}, the empirical distribution of the joint actions of all players approaches the {\it CCE} set of correlated actions. %yielding non-positive rewards. 
%all the time-average regrets of a player running Algorithm~\ref{alg:algorithm} will vanish, regardless of the behaviour of other players. It follows that, if all players use the same proposed procedure, we obtain the convergence of the empirical distribution of the joint actions of all players to approach the {\it CCE} set of correlated actions yielding non-positive rewards. 
Now assuming the considered coordination game $\Gc$ is a socially concave game~\cite{hart2015markets}, which satisfies the following two socially concave properties:\footnote{Note that many nature games satisfy these conditions, such as linear resource allocation or network congestion control games.} 
\begin{itemize}
    \item \emph{Assumption 1}: $\sum_{i\in\Nc} \lambda_i u_i(a)$ is concave in $a$,
    \item \emph{Assumption 2}: $u_i(a_i,a_{-i})$ is convex in $a_{-i}$.
\end{itemize}
The combination of the two assumptions implies that player $i$'s local utility function is concave in $a_i$ given $a_{-i}$ is fixed. 

Let $a$ be a CCE of $\Gc$, and let $\bar{a} = \Er_{\pi} \lsb a\rsb$, we then prove that $\bar{a}$ is a pure NE of $G$. Without loss of generality, assume that $\lambda_i = 1,\ \forall i\in\Nc$. As $a$ is a CCE point, it satisfies
\begin{equation} \label{eq:CCE_property}
\Er \lsb u_i(a) \rsb \geq \Er \lsb u_i(a_i^{\prime},a_{-i}) \rsb \ ,
\end{equation}
for every $i\in\Nc$ and every action $a_i^{\prime}\in\Ac_i$. Also, since $\bar{a}\in \Ac$, using Assumption 2 we have
\begin{equation} \label{eq:CCE_property2}
\Er \lsb u_i(a_i^{\prime}, a_{-i} \rsb \geq u_i \big( a_i^{\prime}, \Er \lsb a_{-i} \rsb \big) = u_i(a_i^{\prime}, \bar{a}_{-i})\ .
\end{equation}

Combining~\eqref{eq:CCE_property} and~\eqref{eq:CCE_property2} yields
\begin{equation} \label{eq:CCE_property3}
\Er \lsb u_i(a) \rsb \geq u_i(a_i^{\prime}, \bar{a}_{-i}) \ .
\end{equation}

Replacing $a_i^{\prime} = \bar{a}_i$ and then summing over all $i\in \Nc$
$$\sum\nolimits_{i\in\Nc} \Er \lsb u_i(a) \rsb \geq \sum\nolimits_{i\in\Nc} u_i(\bar{a}_i, \bar{a}_{-i}) = \sum\nolimits_{i\in\Nc} u_i(\bar{a}) \ .$$

Using Assumption 1 implies 
$$\sum\nolimits_{i\in\Nc} \Er \Big[ u_i (a) \Big] = \Er \lsb \sum\nolimits_{i\in\Nc} u_i(a) \rsb \leq \sum\nolimits_{i\in\Nc} u_i \Big( \Er \lsb a \rsb \Big)\ . $$

Therefore
$$\sum\nolimits_{i\in\Nc} \Er \Big[ u_i (a) \Big] = \sum\nolimits_{i\in\Nc}u_i(\bar{a}) \ .$$

Thus, $u_i(\bar{a})=\Er\lsb u_i(a)\rsb$ for every $i$, and~\eqref{eq:CCE_property3} becomes
$$u_i(\bar{a})\geq u_i(a_i^{\prime}, \bar{a}_{-i})\ .$$
for every $a^{\prime}_i\in\Ac_i$. Therefore, $\bar{a}$ is a pure Nash equilibrium of $\Gc$. This implies that the time-average of the empirical distributions converges to a pure Nash equilibrium.

\textbf{Step 2 (Establishing convergence to a social optimum):} As the players do not know the upper bound $M$ of their global utility there are three particular cases:

$\underline{\mbox{Case 1}\ (U < M)}:$
When $U< M$ and the computed point $a$ is a non-Nash point, then $\exists a^{\prime}\in\Ac$ such that $u^g(a^{\prime})>u^g(a)$ and thus the players' regrets are positive. Therefore, the algorithm will continue to run until all players' regrets vanish, which results in the convergence to a PSNE point. 

$\underline{\mbox{Case 2}\ (U = M)}:$
When $U = M$, the computed point $a$ is a PSNE point, but if $\omega <1$ then the algorithm will not stop but repeat (line 27) until the value of $\omega = 1$ is reached. The proposed algorithm incorporates some randomness (lines 18--23) as part of its procedure, where players update their strategies according to their regrets with probability $(1-\delta_n)$ and randomly select an action from its action set otherwise. In case either all regrets vanish or the current decision meets the synchronized global utility threshold $(u^g(a_i^t,a_{-i}^t)\geq U)$, the player retains its current decision. Thus, if the learning procedure is repeated a sufficient number of times, then the subsequent converged point with a higher global utility will replace the previous computed point as a better PSNE outcome. This procedure is repeated until it reaches the social-optimum PSNE resulting in $\omega =1$.  

$\underline{\mbox{Case 3}\ (U > M)}:$
When $U > M$, the value of $\omega = 1$ cannot be found and thus the algorithm will stop at a random PSNE point. In this case, the best action profile found so far with $\omega =1$ is the social-optimum PSNE outcome. 
\end{proof}

%===================================
\subsection{Communication Complexity Trade-off}
%We discuss here the communication complexity and algorithm run-time of our proposed method. 
A multiplayer normal-form game can be respresented with a tensor $\Uc$ in which each entry $u_i(a_i, a_{-i})$ specifies an individual payoff for player $i$ under the action profile $(a_i, a_{-i})$. The number of action profiles grows exponentially in the number of players. A game with $n$ players and at most $m$ actions per player has a total of $|\Ac|=m^n$ action profiles. Each player has one individual payoff for each action profile and thus it requires $n\times m^n$ integer numbers to represent all players' payoffs. Therefore, the complexity of searching over the space of all possible action profiles is $\mathcal{O}(n\ m^n)$, which is generally intractable both for storing a game and computing its solution. 

Our algorithm follows the regret based iterative learning rule, which has shown to be able to compute an approximate CCE using only $poly(n, m)$ (polynomial in $n$ and $m$) payoff queries~\cite{hart2010long, babichenko2016query}. For steering all players' collective behavior to a desired equilibrium, i.e, the CCE with maximum social welfare, extra information is required to exchange among the players. In order to demonstrate the communication complexity trade-off of our approach, we evaluate the ratio of the number of entries queried by all players to the number of entries in $\Uc$ at each decision point. Using our method, at each iteration every player $i$ has to send counterfactual payoffs for every alternative action to all other ($n-1$) players, which incurs the total number of $n(n-1)m$ queries. Thus, the resulting ratio is $\displaystyle\frac{n-1}{m^{n-1}}\ll 1$ (for a large game scenario). The total amount of queries over time (as a function of the number of players $n$, the number of actions per player $m$, and the number of iterations until convergence $T$) is therefore less than~$\sim\mathcal{O}(m\ n^2\ T)$. As a comparison reference, an equivalent number of queries over time required by the standard Regret Matching algorithm as in~\cite{hart2000simple} is~$\sim\mathcal{O}(n^2\ T)$. %Overall, the communication complexity of our proposed algorithm grows linearly in the square of the number of players and in the number of actions per player.   

% https://www.ifaamas.org/Proceedings/aamas2022/pdfs/p507.pdf
% https://arxiv.org/pdf/1306.6686.pdf

%The state-of-the-art literature demonstrates that arbitrary CCE are computable via iterative learning rules, such as RM~\cite{hart2000simple}. However, there is no performance guarantee for finding the global-optimal solution, i.e., the CCE with maximum social welfare (sum of all players' payoffs), among the set of CCE. \HN{ You need to go into more technical detail than this. Why can't it provide any guarantee? what you hope to change?}

%Third, another challenge of cooperative control in the game-theoretic approach lies in designing both the local objective functions for each of the individual players and their learning dynamics so that they collectively accomplish a desired global objective. In a repeated game, the amount of information that players acquire in repeated plays directly influences the resulting learning dynamics.

%%%%%%%%%%%%%%%%%%%%%%%%%%%%%%%%%%%%%%%%%%%%%%%%%%%%%%%%%%%%%%%%%%%%%%%%

\section{Experimental Results}
We numerically test our method on two different applications: (1) Resource allocation game, where the number of player is much larger than the number of limited resources, and (2) Combinatorial task assignment game, where the number of players is much smaller than the number of available tasks. 

%===================================
\subsection{Multi-Agent Resource Allocation Game}
We use a large-scale network selection game (illustrated in Fig.~\ref{fig:simple_graph}) in the application of resource optimization in wireless networks. Here, there are $n$ mobile users (players) seeking to access network services from $m$ base stations (actions). We consider user throughput as player's utility and adopt the proportional-fair throughput model as in~\cite{nguyen2017reinforcement,7878785,8489869}. Under this utility model, each player obtains a different user-specific utility which is defined as user's maximum physical rate (generated randomly in the range $[1, 100]$ Mbps) divided by the total number of players selecting the same resource. %, to numerically test our proposed method.
\begin{figure}[!h]
 \centering
 \includegraphics[width=0.6\linewidth]{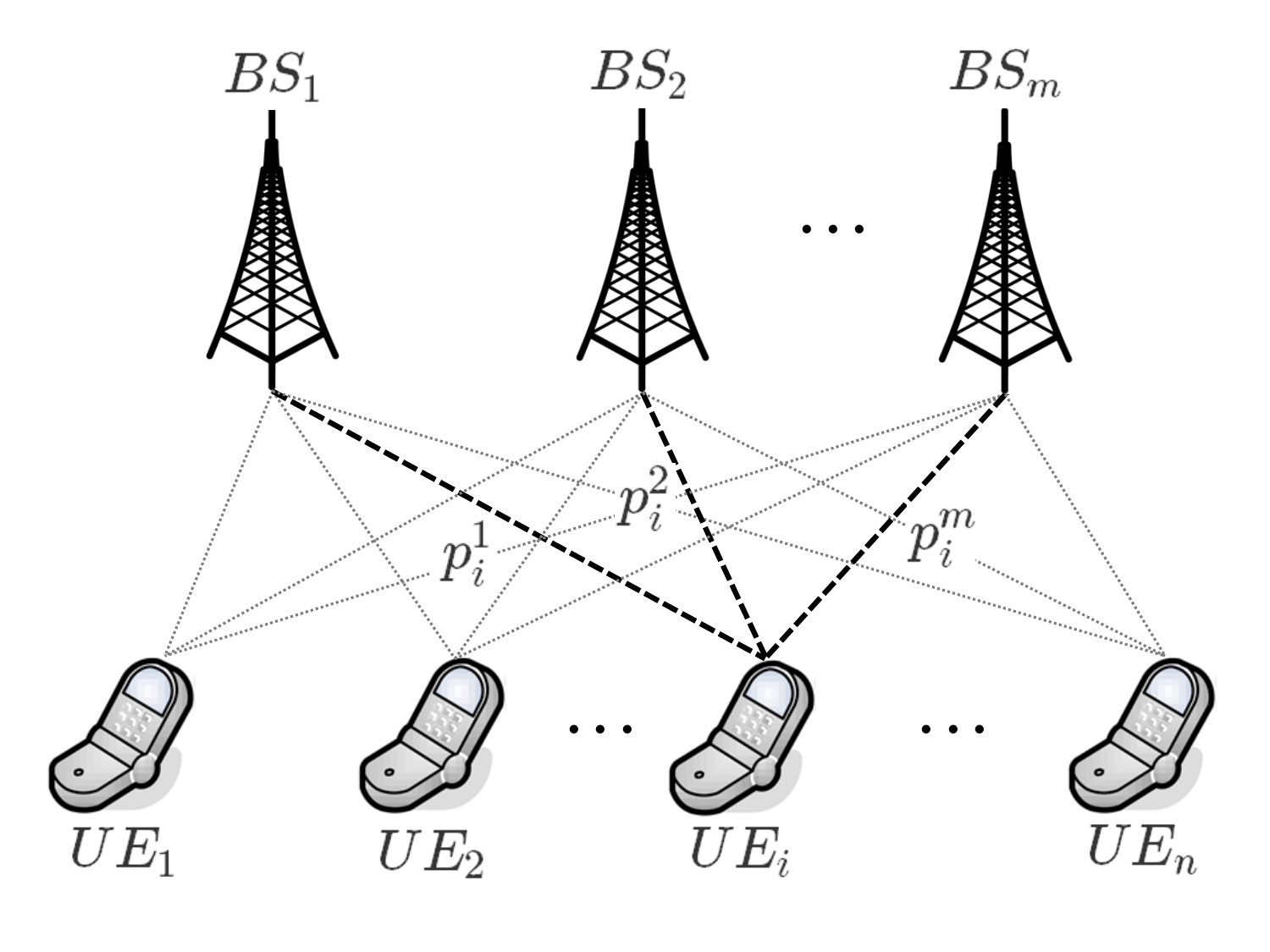}
 \caption{A $n\times m$ resource allocation game in wireless networks consisting of $n$ end-users and $m$ base stations.}
 \label{fig:simple_graph}
\end{figure}

We first demonstrate the performance of our method in achieving a social optimum outcome on a small $2\times 2$ resource allocation game as illustrated in the example in Fig.~\ref{fig:2x2resoucegame}. Fig.~\ref{fig:ResourceAllocation2players} demonstrated the convergence of our method toward the social optimum outcome $(R\#2, R\#1)$, which provides the maximum network throughput of $10$ Mbps, instead of $8$ Mbps as offered by the other suboptimal NE $(R\#1, R\#2)$. Meanwhile, the result of running the standard RM randomly converges to one of the two points.   
\begin{figure}[!h]
  \centering
%   \begin{subfigure}[b]{0.493\linewidth}
%     \includegraphics[width=0.9\linewidth]{figures/resource-allocation2.PNG}
%     % \caption{More coffee.}
%   \end{subfigure} 
%   \begin{subfigure}[b]{0.493\linewidth}
%     \includegraphics[width=\linewidth]{figures/ResourceAllocation_choice}
%     % \caption{More coffee.}
%   \end{subfigure}   
  \begin{subfigure}[b]{0.493\linewidth}
    \includegraphics[width=\linewidth]{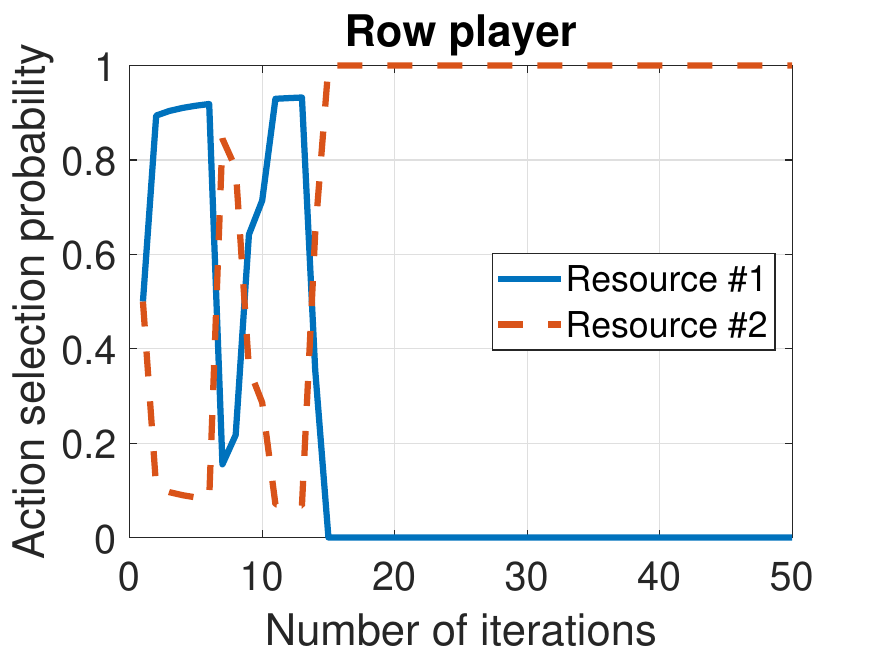}
    % \caption{Coffee.}
  \end{subfigure}
  \begin{subfigure}[b]{0.493\linewidth}
    \includegraphics[width=\linewidth]{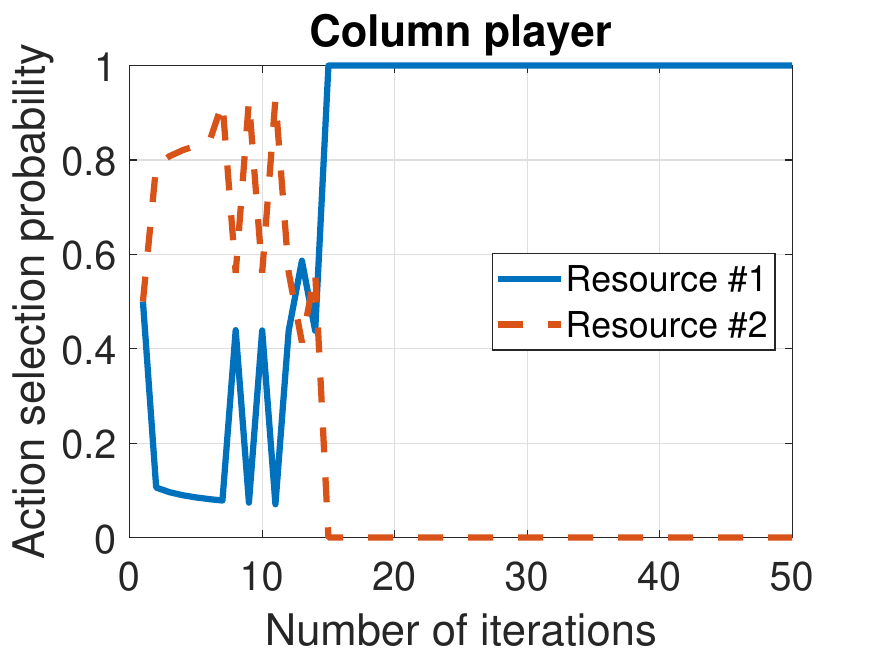}
    % \caption{More coffee.}
  \end{subfigure}
  \caption{Evolution of action selection probabilities of all players versus iterations in the $2\times2$ resource selection game as illustrated in the example in Fig.~\ref{fig:2x2resoucegame}.}
  \label{fig:ResourceAllocation2players}
\end{figure}

 We now test the performance of our proposed algorithm on the same game at a large scale. We compare the performance of the following regret minimization methods:
\begin{itemize}
    \item \textbf{SORM}: This is our proposed algorithm, which makes three major modifications to the standard RM for computing a social-optimum equilibrium outcome.
    \item \textbf{LURM}: This is the standard Regret Matching~\cite{hart2000simple}, in which players use local utilities to update their action strategies. 
    \item \textbf{GURM}: GURM makes one change to the standard RM, which uses the global utility function to update players' strategies. We included the comparison with the GURM scheme here to demonstrate that this modification alone is not sufficient to compute the desired outcome as provided in our method.
    %\item \textbf{Greedy RM}~\cite{zhang2022equilibrium}: Greedy RM is identical to standard RM except that each new iteration is greedily weighted to minimize the distance to the set of equilibria. Greedy RM has been shown to achieve faster convergence speed and higher social welfare than other existing regret-based approaches.
\end{itemize}
\begin{figure}[t!]
  \centering
  \begin{subfigure}[b]{0.493\linewidth}
    \includegraphics[width=\linewidth]{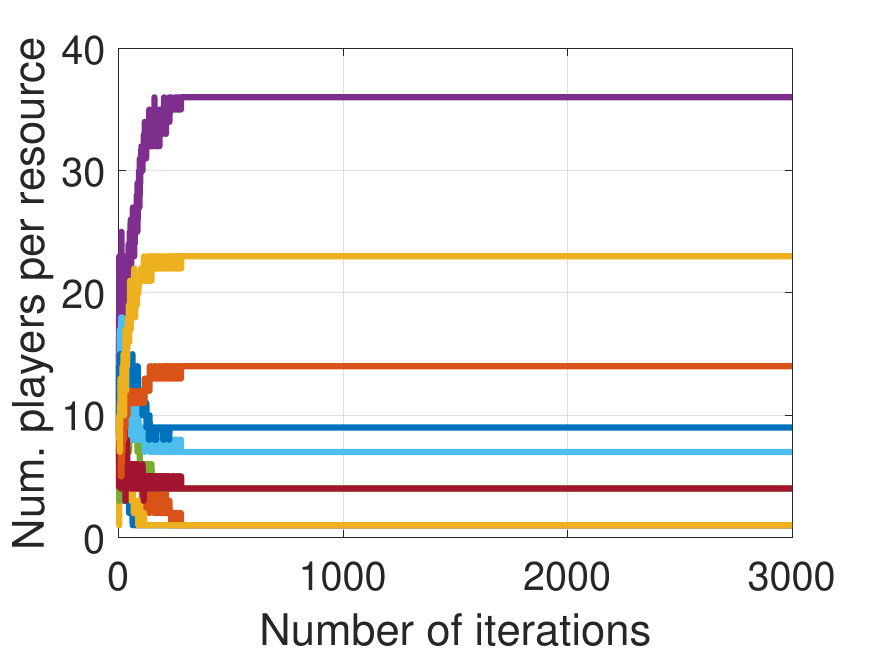}
    \caption{SORM}
  \end{subfigure} 
  \begin{subfigure}[b]{0.493\linewidth}
    \includegraphics[width=\linewidth]{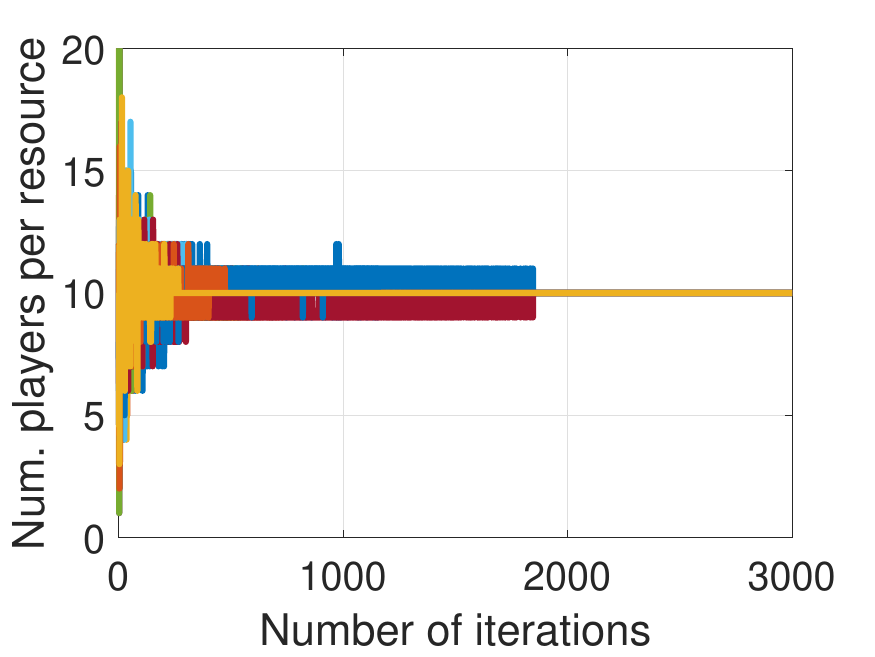}
    \caption{LURM}
  \end{subfigure}   
  \begin{subfigure}[b]{0.493\linewidth}
    \includegraphics[width=\linewidth]{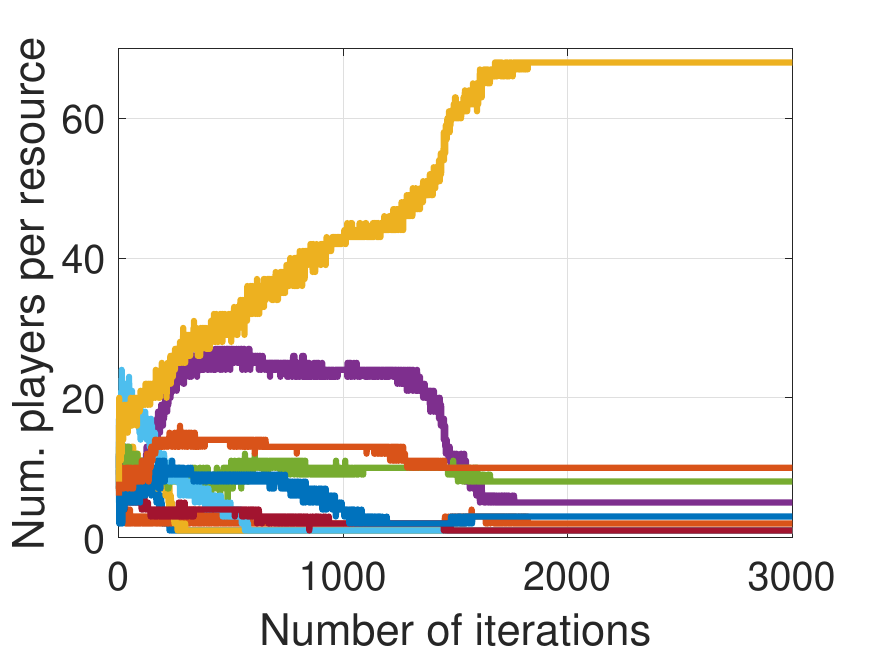}
    \caption{GURM}
  \end{subfigure} 
  \begin{subfigure}[b]{0.493\linewidth}
    \includegraphics[width=\linewidth]{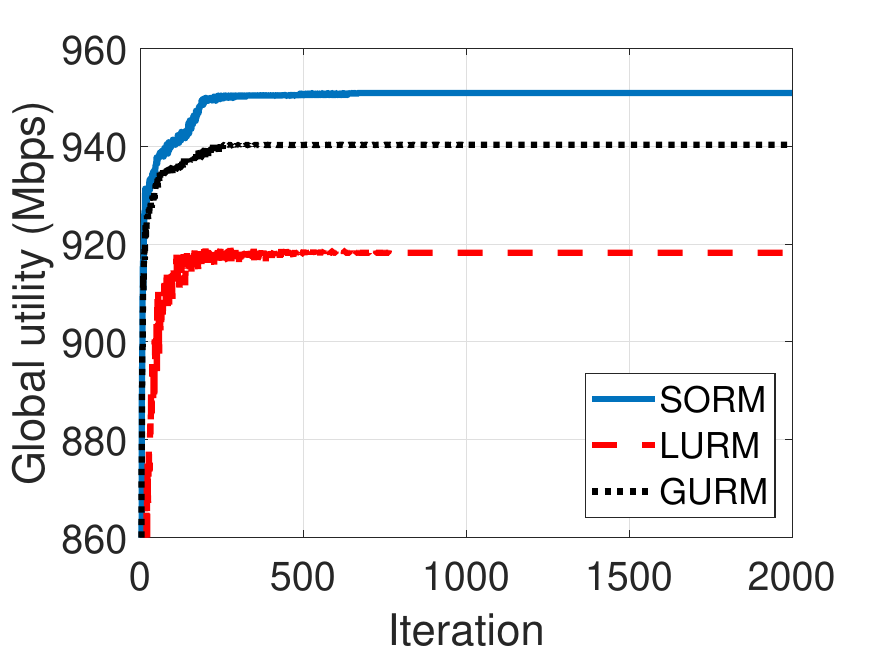}
    \caption{Global utility}
  \end{subfigure}   
  \caption{Illustration of convergence of different algorithms: (a) SORM, (b) LURM, (c) GURM; and (d) their achieved global utilities in a $100\times10$ resource allocation game.}
  \label{fig:ResourceAllocation100x10game}
\end{figure}

\begin{figure}[t!]
  \centering
  \begin{subfigure}[b]{0.493\linewidth}
    \includegraphics[width=\linewidth]{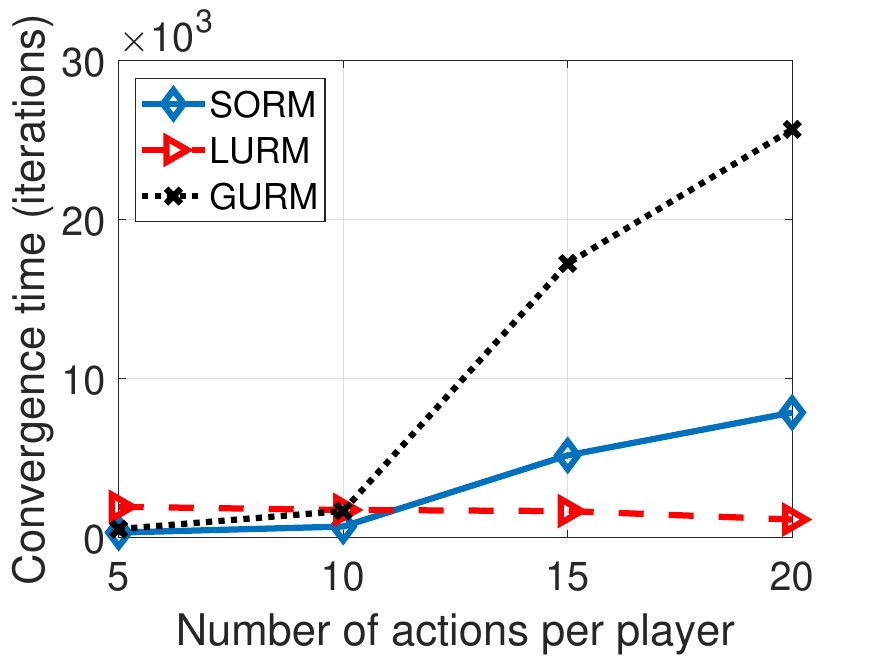}
    % \caption{Convergence speed}
  \end{subfigure} 
  \begin{subfigure}[b]{0.493\linewidth}
    \includegraphics[width=\linewidth]{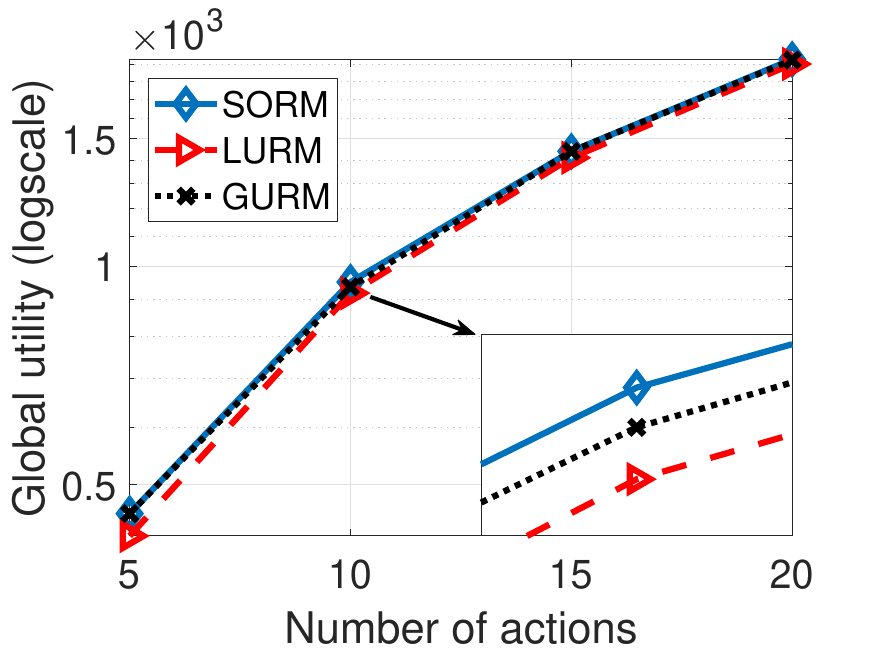}
    % \caption{All players' utilities}
  \end{subfigure}   
  \caption{Comparison of convergence time and achieved global utility when increasing the number of resources in a $100-$players resource allocation game.}
  \label{fig:ResourceAllocation_scalability}
\end{figure}

 %We follow the same proportional-fair throughput model in~\cite{nguyen2017reinforcement}, in which each player obtains a different user-specific throughput depending on its own physical rate  on the targeted resource and the number of players sharing the same resource. 

We conduct one random run of the above algorithms for a $100\times10$ resource allocation game and plot the convergence behavior (with regard to the number of players per resource) and the achieved global utility (sum of all player's utilities) versus iteration in Fig.~\ref{fig:ResourceAllocation100x10game}. As shown, our SORM obtains the highest performance both in convergence speed and global utility compared to the others. Both LURM and GURM algorithms converge at a similar speed. GURM, however, reaches the PSNE point that yields a higher global utility.      

To test how the different algorithms scale with regards to the number of action choices, we fix the player number at $100$ players and vary the number of actions per player from $5$ to $20$ resources. Figs.~\ref{fig:ResourceAllocation_scalability}(a) and~\ref{fig:ResourceAllocation_scalability}(b) respectively show the performances in convergence speed and global utility of the three algorithms. We first observe that, although the convergence speed of LURM is not much affected by increasing the number of available actions in the game, LURM always converges to a suboptimal social welfare outcome in all cases. Meanwhile, the remaining two algorithms can successfully find PSNE outcomes with much higher social welfare, with SORM achieving superior performance than GURM.   

In comparison of convergence rate with LURM, both GURM and SORM converge relatively fast when the number of action choices are small (i.e., 5 or 10 actions), however require a longer time to converge for a large number of actions per player. This is because the number of joint action profiles grows exponentially when the number of actions increases, and thus the number of potential PSNE points will also increase significantly. Consequently, the time required for searching for a good outcome over the space of potential PSNE profiles will increase. To deal with this long convergence issue for large games, we filter out all the noise (small increment in global utility) by setting all the small probabilities (i.e., $\leq 0.03$) of switching to other actions to zero, and normalize the vector of action probabilities accordingly. With this modification, as demonstrated in Fig.~\ref{fig:ResourceAllocation_scalability}, the convergence speed of SORM is almost linearly proportional to the number of actions per player, while the running time required for the GURM algorithm to converge grows significantly with the number of actions per player.

%===================================
\subsection{Multi-Agent Task Assignment Game}
We also conduct some experiments on games in which the number of players is much smaller than the number of action choices. We consider a hard combinatorial task assignment game where a group of $n$ agents needs to select tasks from a large set of available tasks, i.e., $m$ tasks with $m >> n$. Here, the global utility of an assignment profile is measured by the sum of individual task utilities and the objective is to find an assignment profile that maximizes the global utility. This problem is known as NP-hard in general, whose optimal solution is intractable to compute. 

We follow the same task utility model as described in~\cite{qu2019distributed}. We generate the set of agents and the set of targets uniformly random as point mass objects in a three dimentional box $[0, 1]^3$. Each target $k$ has a different value $V(k)$, which is uniformly sampled from $[10, 100]$. The utility of a target $k$ is the probability that it gets destroyed times its value
$$u(k)= V(k)\lb 1-\prod\nolimits_{i=1}^n\ p_{ik} \rb \Ibb_{\lbr k=a_i \rbr} \ .$$
Here, $p_{ik}=1/\big[ 1+exp\lb-\alpha\ \mbox{dist}(i,k)+\beta \rb \big]$ is the probability of survival of the target $k$ with respect to the agent $i$, which is an increasing function of the distance between $i$ and $k$. The parameter $\alpha, \beta$ characterizes how the survival probability increases with respect to the distance between the agent and the target. The global utility is thus can be defined as the total expected value of all the targets given the assignment profile:
$$u^g(a_i,a_{-i})= \sum\nolimits_{k=1}^{m}\ V(k)\lb 1-\prod\nolimits_{i=1}^n\ p_{ik} \rb \Ibb_{\lbr k=a_i \rbr} \ .$$
%to numerically test our proposed SORM algorithm.

We set $\alpha = 2$, $\beta = 1$, $V(k)\sim [10, 100]$ (uniformly random) and compare the performance of SORM versus GURM under two different cases:
\begin{itemize}
    \item Case I: we choose $n=20$ and $m =40$.
    \item Case II: we choose $n=40$ and $m=80$.
    % \item Case III: we choose $n=500$ and $m=1000$.
\end{itemize}
\begin{figure}[!h]
  \centering
  \begin{subfigure}[b]{0.493\linewidth}
    \includegraphics[width=\linewidth]{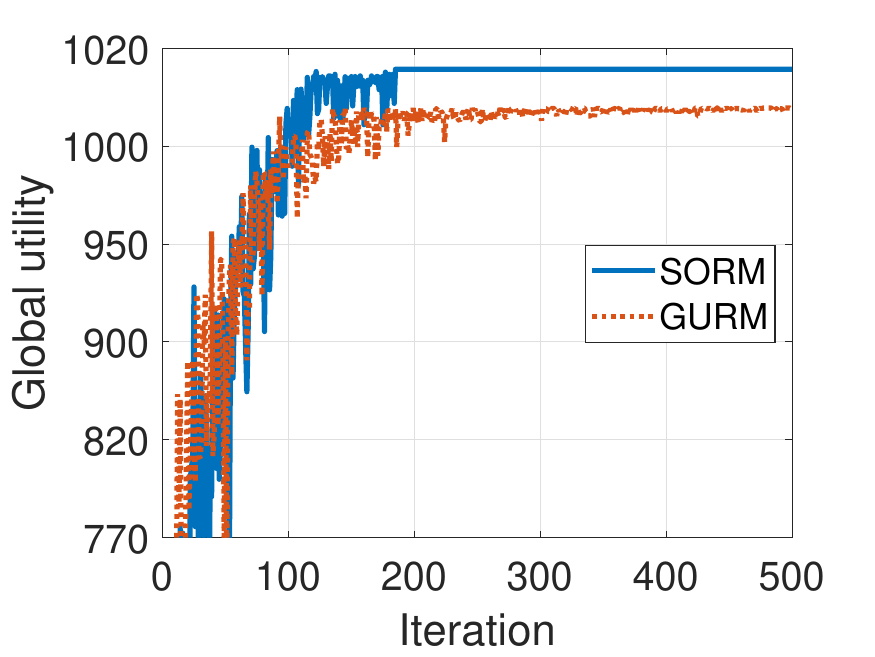}
    \caption{Case I: $n=20$ and $m =40$}
  \end{subfigure}
  \begin{subfigure}[b]{0.493\linewidth}
    \includegraphics[width=\linewidth]{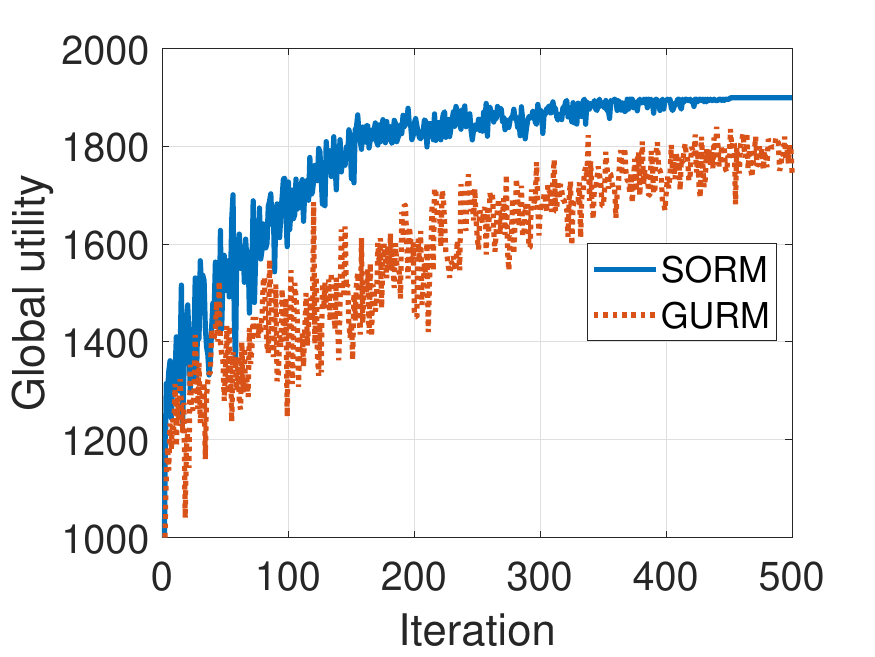}
    \caption{Case II: $n=40$ and $m =80$}
  \end{subfigure}
  \caption{Comparison on global utility between SORM and GURM in the multi-agent task assignment games.}
  \label{fig:WTAgame}
\end{figure}
% The simulation results are shown in Fig.~\ref{fig:WTAgame}. 
Note that we do not include the results of running LURM here as this algorithm was designed to solve non-cooperative games and thus results in very poor performance under this game setting as observed in our simulation. This can be explained by the fact that every player running LURM selfishly optimizes the utility of its own selected target, without considering the behaviors of the other players. Consequently, all players tend to choose a same target with highest value, which results in a very low value of global utility.  

It can be seen from Fig.~\ref{fig:WTAgame} that our proposed SORM algorithm has a clearly faster convergence speed, and achieves a better value of global utility in both the considered cases, as compared to that of the GURM method. Also, with the same running time of $500$ iterations, our SORM has successfully converged to a pure-strategy Nash action profile, while the GURM algorithm converges relatively slow and has not yet converged to a pure Nash point. Note that the GURM algorithm is not the standard Regret Matching approach (which is known to perform poorly in this cooperative game setting as discussed earlier), but is a modified RM algorithm which follows one (out of three) modifications proposed by our approach. These additional results clearly demonstrate the superior performance of our proposed method in various game settings.

%%%%%%%%%%%%%%%%%%%%%%%%%%%%%%%%%%%%%%%%%%%%%%%%%%%%%%%%%%%%%%%%%%%%%%%%

\section{Conclusions}
We introduced a new regret matching based method to compute a social optimum equilibrium for large-scale distributed multiagent optimization problems. Our results are general and can apply to many applications of coordination games. We theoretically proved that our method can guarantee convergence to a stable social optimal pure strategy Nash equilibrium (PSNE) whereas existing regret matching based methods typically converge to an arbitrary point or cycle among the set of attainable outcomes. We numerically show via two different large-scale resource optimization applications that our method can reach the PSNE outcome yielding highest social welfare compared to the other related methods. %We also find that our proposed method can obtain a faster convergence rate than the standard Regret Matching.  
For future research, we will also consider the multi-agent distributed control problem in a dynamic environment, in which the set of players' actions and their utilities evolve with time.

\section*{Acknowledgement}
Hung Nguyen is partially supported by DSTG grant ID10380.
%%%%%%%%%%%%%%%%%%%%%%%%%%%%%%%%%%%%%%%%%%%%%%%%%%%%%%%%%%%%%%%%%%%%%%%%

%%% The acknowledgments section is defined using the "acks" environment
%%% (rather than an unnumbered section). The use of this environment 
%%% ensures the proper identification of the section in the article 
%%% metadata as well as the consistent spelling of the heading.

% \begin{acks}
% If you wish to include any acknowledgments in your paper (e.g., to 
% people or funding agencies), please do so using the `\texttt{acks}' 
% environment. Note that the text of your acknowledgments will be omitted
% if you compile your document with the `\texttt{anonymous}' option.
% \end{acks}

%%%%%%%%%%%%%%%%%%%%%%%%%%%%%%%%%%%%%%%%%%%%%%%%%%%%%%%%%%%%%%%%%%%%%%%%

%%% The next two lines define, first, the bibliography style to be 
%%% applied, and, second, the bibliography file to be used.

\bibliographystyle{ACM-Reference-Format} 
\bibliography{reference}

%%% -*-BibTeX-*-
%%% Do NOT edit. File created by BibTeX with style
%%% ACM-Reference-Format-Journals [18-Jan-2012].

\begin{thebibliography}{30}

%%% ====================================================================
%%% NOTE TO THE USER: you can override these defaults by providing
%%% customized versions of any of these macros before the \bibliography
%%% command.  Each of them MUST provide its own final punctuation,
%%% except for \shownote{}, \showDOI{}, and \showURL{}.  The latter two
%%% do not use final punctuation, in order to avoid confusing it with
%%% the Web address.
%%%
%%% To suppress output of a particular field, define its macro to expand
%%% to an empty string, or better, \unskip, like this:
%%%
%%% \newcommand{\showDOI}[1]{\unskip}   % LaTeX syntax
%%%
%%% \def \showDOI #1{\unskip}           % plain TeX syntax
%%%
%%% ====================================================================

\ifx \showCODEN    \undefined \def \showCODEN     #1{\unskip}     \fi
\ifx \showDOI      \undefined \def \showDOI       #1{#1}\fi
\ifx \showISBNx    \undefined \def \showISBNx     #1{\unskip}     \fi
\ifx \showISBNxiii \undefined \def \showISBNxiii  #1{\unskip}     \fi
\ifx \showISSN     \undefined \def \showISSN      #1{\unskip}     \fi
\ifx \showLCCN     \undefined \def \showLCCN      #1{\unskip}     \fi
\ifx \shownote     \undefined \def \shownote      #1{#1}          \fi
\ifx \showarticletitle \undefined \def \showarticletitle #1{#1}   \fi
\ifx \showURL      \undefined \def \showURL       {\relax}        \fi
% The following commands are used for tagged output and should be
% invisible to TeX
\providecommand\bibfield[2]{#2}
\providecommand\bibinfo[2]{#2}
\providecommand\natexlab[1]{#1}
\providecommand\showeprint[2][]{arXiv:#2}

\bibitem[\protect\citeauthoryear{Aumann}{Aumann}{1987}]%
        {aumann1987correlated}
\bibfield{author}{\bibinfo{person}{Robert~J Aumann}.}
  \bibinfo{year}{1987}\natexlab{}.
\newblock \showarticletitle{{Correlated Equilibrium as An Expression of
  Bayesian Rationality}}.
\newblock \bibinfo{journal}{\emph{Econometrica: Journal of the Econometric
  Society}} (\bibinfo{year}{1987}), \bibinfo{pages}{1--18}.
\newblock


\bibitem[\protect\citeauthoryear{Babichenko}{Babichenko}{2016}]%
        {babichenko2016query}
\bibfield{author}{\bibinfo{person}{Yakov Babichenko}.}
  \bibinfo{year}{2016}\natexlab{}.
\newblock \showarticletitle{Query complexity of approximate Nash equilibria}.
\newblock \bibinfo{journal}{\emph{Journal of the ACM (JACM)}}
  \bibinfo{volume}{63}, \bibinfo{number}{4} (\bibinfo{year}{2016}),
  \bibinfo{pages}{1--24}.
\newblock


\bibitem[\protect\citeauthoryear{Barman and Ligett}{Barman and Ligett}{2015}]%
        {barman2015finding}
\bibfield{author}{\bibinfo{person}{Siddharth Barman} {and}
  \bibinfo{person}{Katrina Ligett}.} \bibinfo{year}{2015}\natexlab{}.
\newblock \showarticletitle{{Finding any nontrivial coarse correlated
  equilibrium is hard}}.
\newblock \bibinfo{journal}{\emph{ACM SIGecom Exchanges}} \bibinfo{volume}{14},
  \bibinfo{number}{1} (\bibinfo{year}{2015}), \bibinfo{pages}{76--79}.
\newblock


\bibitem[\protect\citeauthoryear{Berg and Sandholm}{Berg and Sandholm}{2017}]%
        {berg2017exclusion}
\bibfield{author}{\bibinfo{person}{Kimmo Berg} {and} \bibinfo{person}{Tuomas
  Sandholm}.} \bibinfo{year}{2017}\natexlab{}.
\newblock \showarticletitle{{Exclusion Method for Finding Nash Equilibrium in
  Multiplayer Games}}. In \bibinfo{booktitle}{\emph{Proceedings of the AAAI
  Conference on Artificial Intelligence}}, Vol.~\bibinfo{volume}{31}.
\newblock


\bibitem[\protect\citeauthoryear{Berri, Varma, Lasaulce, and Radjef}{Berri
  et~al\mbox{.}}{2016}]%
        {berri2016correlated}
\bibfield{author}{\bibinfo{person}{Sara Berri}, \bibinfo{person}{Vineeth
  Varma}, \bibinfo{person}{Samson Lasaulce}, {and}
  \bibinfo{person}{Mohammed~Said Radjef}.} \bibinfo{year}{2016}\natexlab{}.
\newblock \showarticletitle{{Correlated Equilibria in Wireless Power Control
  Games}}. In \bibinfo{booktitle}{\emph{International Conference on Network
  Games, Control, and Optimization}}. Springer, \bibinfo{pages}{57--68}.
\newblock


\bibitem[\protect\citeauthoryear{Brown}{Brown}{1951}]%
        {brown1951iterative}
\bibfield{author}{\bibinfo{person}{George~W Brown}.}
  \bibinfo{year}{1951}\natexlab{}.
\newblock \showarticletitle{Iterative solution of games by fictitious play}.
\newblock \bibinfo{journal}{\emph{Activity analysis of production and
  allocation}} \bibinfo{volume}{13}, \bibinfo{number}{1}
  (\bibinfo{year}{1951}), \bibinfo{pages}{374--376}.
\newblock


\bibitem[\protect\citeauthoryear{Chandan, Paccagnan, and Marden}{Chandan
  et~al\mbox{.}}{2021}]%
        {chandan2021tractable}
\bibfield{author}{\bibinfo{person}{Rahul Chandan}, \bibinfo{person}{Dario
  Paccagnan}, {and} \bibinfo{person}{Jason~R Marden}.}
  \bibinfo{year}{2021}\natexlab{}.
\newblock \showarticletitle{Tractable Mechanisms for Computing Near-Optimal
  Utility Functions}. In \bibinfo{booktitle}{\emph{Proceedings of the 20th
  International Conference on Autonomous Agents and MultiAgent Systems}}.
  \bibinfo{pages}{306--313}.
\newblock


\bibitem[\protect\citeauthoryear{Feng, Guruganesh, Liaw, Mehta, and Sethi}{Feng
  et~al\mbox{.}}{2021}]%
        {feng2021convergence}
\bibfield{author}{\bibinfo{person}{Zhe Feng}, \bibinfo{person}{Guru
  Guruganesh}, \bibinfo{person}{Christopher Liaw}, \bibinfo{person}{Aranyak
  Mehta}, {and} \bibinfo{person}{Abhishek Sethi}.}
  \bibinfo{year}{2021}\natexlab{}.
\newblock \showarticletitle{Convergence analysis of no-regret bidding
  algorithms in repeated auctions}. In \bibinfo{booktitle}{\emph{Proceedings of
  the AAAI Conference on Artificial Intelligence}}, Vol.~\bibinfo{volume}{35}.
  \bibinfo{pages}{5399--5406}.
\newblock


\bibitem[\protect\citeauthoryear{Hart and Mansour}{Hart and Mansour}{2010}]%
        {hart2010long}
\bibfield{author}{\bibinfo{person}{Sergiu Hart} {and} \bibinfo{person}{Yishay
  Mansour}.} \bibinfo{year}{2010}\natexlab{}.
\newblock \showarticletitle{How long to equilibrium? The communication
  complexity of uncoupled equilibrium procedures}.
\newblock \bibinfo{journal}{\emph{Games and Economic Behavior}}
  \bibinfo{volume}{69}, \bibinfo{number}{1} (\bibinfo{year}{2010}),
  \bibinfo{pages}{107--126}.
\newblock


\bibitem[\protect\citeauthoryear{Hart and Mas-Colell}{Hart and
  Mas-Colell}{2000}]%
        {hart2000simple}
\bibfield{author}{\bibinfo{person}{Sergiu Hart} {and} \bibinfo{person}{Andreu
  Mas-Colell}.} \bibinfo{year}{2000}\natexlab{}.
\newblock \showarticletitle{{A Simple Adaptive Procedure Leading to Correlated
  Equilibrium}}.
\newblock \bibinfo{journal}{\emph{Econometrica}} \bibinfo{volume}{68},
  \bibinfo{number}{5} (\bibinfo{year}{2000}), \bibinfo{pages}{1127--1150}.
\newblock


\bibitem[\protect\citeauthoryear{Hart and Mas-Colell}{Hart and
  Mas-Colell}{2015}]%
        {hart2015markets}
\bibfield{author}{\bibinfo{person}{Sergiu Hart} {and} \bibinfo{person}{Andreu
  Mas-Colell}.} \bibinfo{year}{2015}\natexlab{}.
\newblock \showarticletitle{Markets, correlation, and regret-matching}.
\newblock \bibinfo{journal}{\emph{Games and Economic Behavior}}
  \bibinfo{volume}{93} (\bibinfo{year}{2015}), \bibinfo{pages}{42--58}.
\newblock


\bibitem[\protect\citeauthoryear{Konda, Chandan, and Marden}{Konda
  et~al\mbox{.}}{2021}]%
        {konda2021mission}
\bibfield{author}{\bibinfo{person}{Rohit Konda}, \bibinfo{person}{Rahul
  Chandan}, {and} \bibinfo{person}{Jason~R Marden}.}
  \bibinfo{year}{2021}\natexlab{}.
\newblock \showarticletitle{Mission Level Uncertainty in Multi-Agent Resource
  Allocation}. In \bibinfo{booktitle}{\emph{2021 60th IEEE Conference on
  Decision and Control (CDC)}}. IEEE, \bibinfo{pages}{4521--4526}.
\newblock


\bibitem[\protect\citeauthoryear{Koutsoupias and Papadimitriou}{Koutsoupias and
  Papadimitriou}{1999}]%
        {koutsoupias1999worst}
\bibfield{author}{\bibinfo{person}{Elias Koutsoupias} {and}
  \bibinfo{person}{Christos Papadimitriou}.} \bibinfo{year}{1999}\natexlab{}.
\newblock \showarticletitle{Worst-case equilibria}. In
  \bibinfo{booktitle}{\emph{Annual symposium on theoretical aspects of computer
  science}}. Springer, \bibinfo{pages}{404--413}.
\newblock


\bibitem[\protect\citeauthoryear{Li, Zhao, and Zhu}{Li et~al\mbox{.}}{2022}]%
        {li2022role}
\bibfield{author}{\bibinfo{person}{Tao Li}, \bibinfo{person}{Yuhan Zhao}, {and}
  \bibinfo{person}{Quanyan Zhu}.} \bibinfo{year}{2022}\natexlab{}.
\newblock \showarticletitle{The role of information structures in
  game-theoretic multi-agent learning}.
\newblock \bibinfo{journal}{\emph{Annual Reviews in Control}}
  (\bibinfo{year}{2022}).
\newblock


\bibitem[\protect\citeauthoryear{Marden, Arslan, and Shamma}{Marden
  et~al\mbox{.}}{2007}]%
        {marden2007regret}
\bibfield{author}{\bibinfo{person}{Jason~R Marden}, \bibinfo{person}{G{\"u}rdal
  Arslan}, {and} \bibinfo{person}{Jeff~S Shamma}.}
  \bibinfo{year}{2007}\natexlab{}.
\newblock \showarticletitle{Regret based dynamics: convergence in weakly
  acyclic games}. In \bibinfo{booktitle}{\emph{Proceedings of the 6th
  international joint conference on Autonomous agents and multiagent systems}}.
  \bibinfo{pages}{1--8}.
\newblock


\bibitem[\protect\citeauthoryear{Marden, Arslan, and Shamma}{Marden
  et~al\mbox{.}}{2009}]%
        {marden2009cooperative}
\bibfield{author}{\bibinfo{person}{Jason~R Marden}, \bibinfo{person}{G{\"u}rdal
  Arslan}, {and} \bibinfo{person}{Jeff~S Shamma}.}
  \bibinfo{year}{2009}\natexlab{}.
\newblock \showarticletitle{Cooperative control and potential games}.
\newblock \bibinfo{journal}{\emph{IEEE Transactions on Systems, Man, and
  Cybernetics, Part B (Cybernetics)}} \bibinfo{volume}{39}, \bibinfo{number}{6}
  (\bibinfo{year}{2009}), \bibinfo{pages}{1393--1407}.
\newblock


\bibitem[\protect\citeauthoryear{Marden and Shamma}{Marden and Shamma}{2015}]%
        {marden2015game}
\bibfield{author}{\bibinfo{person}{Jason~R Marden} {and}
  \bibinfo{person}{Jeff~S Shamma}.} \bibinfo{year}{2015}\natexlab{}.
\newblock \showarticletitle{Game theory and distributed control}.
\newblock In \bibinfo{booktitle}{\emph{Handbook of game theory with economic
  applications}}. Vol.~\bibinfo{volume}{4}. \bibinfo{publisher}{Elsevier},
  \bibinfo{pages}{861--899}.
\newblock


\bibitem[\protect\citeauthoryear{Mihailescu, Nguyen, and Webb}{Mihailescu
  et~al\mbox{.}}{2015}]%
        {7348945}
\bibfield{author}{\bibinfo{person}{Marian Mihailescu}, \bibinfo{person}{Hung
  Nguyen}, {and} \bibinfo{person}{Michael~R. Webb}.}
  \bibinfo{year}{2015}\natexlab{}.
\newblock \showarticletitle{Enhancing wireless communications with software
  defined networking}. In \bibinfo{booktitle}{\emph{2015 Military
  Communications and Information Systems Conference (MilCIS)}}.
  \bibinfo{pages}{1--6}.
\newblock
\urldef\tempurl%
\url{https://doi.org/10.1109/MilCIS.2015.7348945}
\showDOI{\tempurl}


\bibitem[\protect\citeauthoryear{Monderer and Shapley}{Monderer and
  Shapley}{1996}]%
        {monderer1996potential}
\bibfield{author}{\bibinfo{person}{Dov Monderer} {and} \bibinfo{person}{Lloyd~S
  Shapley}.} \bibinfo{year}{1996}\natexlab{}.
\newblock \showarticletitle{Potential games}.
\newblock \bibinfo{journal}{\emph{Games and economic behavior}}
  \bibinfo{volume}{14}, \bibinfo{number}{1} (\bibinfo{year}{1996}),
  \bibinfo{pages}{124--143}.
\newblock


\bibitem[\protect\citeauthoryear{Nguyen, Nguyen, and White}{Nguyen
  et~al\mbox{.}}{2016}]%
        {7878785}
\bibfield{author}{\bibinfo{person}{Duong~D. Nguyen}, \bibinfo{person}{Hung~X.
  Nguyen}, {and} \bibinfo{person}{Langford~B. White}.}
  \bibinfo{year}{2016}\natexlab{}.
\newblock \showarticletitle{Performance of adaptive RAT selection algorithms in
  5G heterogeneous wireless networks}. In \bibinfo{booktitle}{\emph{2016 26th
  International Telecommunication Networks and Applications Conference
  (ITNAC)}}. \bibinfo{pages}{70--75}.
\newblock
\urldef\tempurl%
\url{https://doi.org/10.1109/ATNAC.2016.7878785}
\showDOI{\tempurl}


\bibitem[\protect\citeauthoryear{Nguyen, Nguyen, and White}{Nguyen
  et~al\mbox{.}}{2017}]%
        {nguyen2017reinforcement}
\bibfield{author}{\bibinfo{person}{Duong~D Nguyen}, \bibinfo{person}{Hung~X
  Nguyen}, {and} \bibinfo{person}{Langford~B White}.}
  \bibinfo{year}{2017}\natexlab{}.
\newblock \showarticletitle{{Reinforcement Learning with Network-Assisted
  Feedback for Heterogeneous RAT Selection}}.
\newblock \bibinfo{journal}{\emph{IEEE Transactions on Wireless
  Communications}} \bibinfo{volume}{16}, \bibinfo{number}{9}
  (\bibinfo{year}{2017}), \bibinfo{pages}{6062--6076}.
\newblock


\bibitem[\protect\citeauthoryear{Nguyen, Nguyen, and White}{Nguyen
  et~al\mbox{.}}{2018}]%
        {8489869}
\bibfield{author}{\bibinfo{person}{Duong~D. Nguyen}, \bibinfo{person}{Hung~X.
  Nguyen}, {and} \bibinfo{person}{Langford~B. White}.}
  \bibinfo{year}{2018}\natexlab{}.
\newblock \showarticletitle{Evaluating Performance of RAT Selection Algorithms
  for 5G Hetnets}.
\newblock \bibinfo{journal}{\emph{IEEE Access}}  \bibinfo{volume}{6}
  (\bibinfo{year}{2018}), \bibinfo{pages}{61212--61222}.
\newblock
\urldef\tempurl%
\url{https://doi.org/10.1109/ACCESS.2018.2875469}
\showDOI{\tempurl}


\bibitem[\protect\citeauthoryear{Nguyen, Rajagopalan, Kim, and Lim}{Nguyen
  et~al\mbox{.}}{2019}]%
        {nguyen2019adaptive}
\bibfield{author}{\bibinfo{person}{Duong~D Nguyen}, \bibinfo{person}{Arvind
  Rajagopalan}, \bibinfo{person}{Jijoong Kim}, {and}
  \bibinfo{person}{Cheng~Chew Lim}.} \bibinfo{year}{2019}\natexlab{}.
\newblock \showarticletitle{Adaptive regret minimization for learning complex
  team-based tactics}.
\newblock \bibinfo{journal}{\emph{IEEE Access}}  \bibinfo{volume}{7}
  (\bibinfo{year}{2019}), \bibinfo{pages}{103019--103030}.
\newblock


\bibitem[\protect\citeauthoryear{Paccagnan, Chandan, and Marden}{Paccagnan
  et~al\mbox{.}}{2022}]%
        {paccagnan2022utility}
\bibfield{author}{\bibinfo{person}{Dario Paccagnan}, \bibinfo{person}{Rahul
  Chandan}, {and} \bibinfo{person}{Jason~R Marden}.}
  \bibinfo{year}{2022}\natexlab{}.
\newblock \showarticletitle{Utility and mechanism design in multi-agent
  systems: An overview}.
\newblock \bibinfo{journal}{\emph{Annual Reviews in Control}}
  (\bibinfo{year}{2022}).
\newblock


\bibitem[\protect\citeauthoryear{Qu, Brown, and Li}{Qu et~al\mbox{.}}{2019}]%
        {qu2019distributed}
\bibfield{author}{\bibinfo{person}{Guannan Qu}, \bibinfo{person}{Dave Brown},
  {and} \bibinfo{person}{Na Li}.} \bibinfo{year}{2019}\natexlab{}.
\newblock \showarticletitle{Distributed greedy algorithm for multi-agent task
  assignment problem with submodular utility functions}.
\newblock \bibinfo{journal}{\emph{Automatica}}  \bibinfo{volume}{105}
  (\bibinfo{year}{2019}), \bibinfo{pages}{206--215}.
\newblock


\bibitem[\protect\citeauthoryear{Raducha and San~Miguel}{Raducha and
  San~Miguel}{2022}]%
        {raducha2022coordination}
\bibfield{author}{\bibinfo{person}{Tomasz Raducha} {and} \bibinfo{person}{Maxi
  San~Miguel}.} \bibinfo{year}{2022}\natexlab{}.
\newblock \showarticletitle{Coordination and equilibrium selection in games:
  the role of local effects}.
\newblock \bibinfo{journal}{\emph{Scientific reports}} \bibinfo{volume}{12},
  \bibinfo{number}{1} (\bibinfo{year}{2022}), \bibinfo{pages}{1--16}.
\newblock


\bibitem[\protect\citeauthoryear{Roughgarden}{Roughgarden}{2015}]%
        {roughgarden2015intrinsic}
\bibfield{author}{\bibinfo{person}{Tim Roughgarden}.}
  \bibinfo{year}{2015}\natexlab{}.
\newblock \showarticletitle{Intrinsic robustness of the price of anarchy}.
\newblock \bibinfo{journal}{\emph{Journal of the ACM (JACM)}}
  \bibinfo{volume}{62}, \bibinfo{number}{5} (\bibinfo{year}{2015}),
  \bibinfo{pages}{1--42}.
\newblock


\bibitem[\protect\citeauthoryear{Saran and Serrano}{Saran and Serrano}{2012}]%
        {saran2012regret}
\bibfield{author}{\bibinfo{person}{Rene Saran} {and} \bibinfo{person}{Roberto
  Serrano}.} \bibinfo{year}{2012}\natexlab{}.
\newblock \showarticletitle{Regret matching with finite memory}.
\newblock \bibinfo{journal}{\emph{Dynamic Games and Applications}}
  \bibinfo{volume}{2}, \bibinfo{number}{1} (\bibinfo{year}{2012}),
  \bibinfo{pages}{160--175}.
\newblock


\bibitem[\protect\citeauthoryear{Vetta}{Vetta}{2002}]%
        {vetta2002nash}
\bibfield{author}{\bibinfo{person}{Adrian Vetta}.}
  \bibinfo{year}{2002}\natexlab{}.
\newblock \showarticletitle{Nash equilibria in competitive societies, with
  applications to facility location, traffic routing and auctions}. In
  \bibinfo{booktitle}{\emph{The 43rd Annual IEEE Symposium on Foundations of
  Computer Science, 2002. Proceedings.}} IEEE, \bibinfo{pages}{416--425}.
\newblock


\bibitem[\protect\citeauthoryear{Zhang, Lerer, and Brown}{Zhang
  et~al\mbox{.}}{2022}]%
        {zhang2022equilibrium}
\bibfield{author}{\bibinfo{person}{Hugh Zhang}, \bibinfo{person}{Adam Lerer},
  {and} \bibinfo{person}{Noam Brown}.} \bibinfo{year}{2022}\natexlab{}.
\newblock \showarticletitle{Equilibrium Finding in Normal-Form Games via Greedy
  Regret Minimization}. In \bibinfo{booktitle}{\emph{Proceedings of the AAAI
  Conference on Artificial Intelligence}}, Vol.~\bibinfo{volume}{36}.
  \bibinfo{pages}{9484--9492}.
\newblock


\end{thebibliography}

%%%%%%%%%%%%%%%%%%%%%%%%%%%%%%%%%%%%%%%%%%%%%%%%%%%%%%%%%%%%%%%%%%%%%%%%
% \newpage \clearpage
% \input{appendix}
\appendix
\section{Appendix}

\subsection{Proof of Lemma~\ref{Existence_PSNE}}
\begin{proof}
Every finite coordination game in which the global objective function aligned with the local utility functions of the players, that is, satisfies the property as in~\eqref{eq:potential_function}, is a generalised ordinal potential game~\cite{marden2009cooperative}.
Let $\phi$ be a potential function of a coordination game $\Gc$. Then the equilibrium set of $\Gc$ corresponds to the set of local maxima of $\phi$. That is, an action profile $a =(a_i,a_{-i})$ is a NE point for $\Gc$ if an only if for every $i\in\Nc$,
$$\phi(a) \geq \phi(a^{\prime}_i, a_{-i}),\ \forall a^{\prime}_i\in \Ac_i \ .$$
%Consequently, if $\phi$ admits a maximal value in $\Ac$, which is true by definition for a finite set $\Ac$, then $\phi$ processes a pure strategy Nash equilibrium.

Consider $a^*=(a_i^*,a_{-i}^*)\in\Ac$ for which $\phi(a^*)$ is maximal (which is true by definition for a finite set $\Ac$), then for any $a^{\prime}=(a_i^{\prime},a_{-i})$:
\begin{multline*}
\phi(a_i^*,a_{-i}^*) > \phi(a^{\prime}_i,a_{-i}) 
\Leftrightarrow u_i(a_i^*,a_{-i}^*) > u(a^{\prime}_i,a_{-i})\ .
\end{multline*}
Hence, the game possesses a pure strategy NE.
\end{proof}

\subsection{Experiments with Stag Hunt Game}
We also run SORM and LURM algorithms for a Stag Hunt game and find similar results to the experiments for the $2\times2$ resource allocation game described in the section of Experimental Results. The Stag Hunt game (illustrated below in Figure~\ref{fig:StagHuntExample}) is a typical example for coordination game, wherein two individuals go out on a hunt. Each can individually choose to hunt a stag or a hare, and must choose a decision at the same time, without knowing the choice of the other. If an individual hunts a stag, they must have the cooperation of their partner in order to succeed. An individual can get a hare by themselves, but a hare is worth less than a stag.

\begin{figure}[!h]
 \centering
 \includegraphics[width=0.9\linewidth]{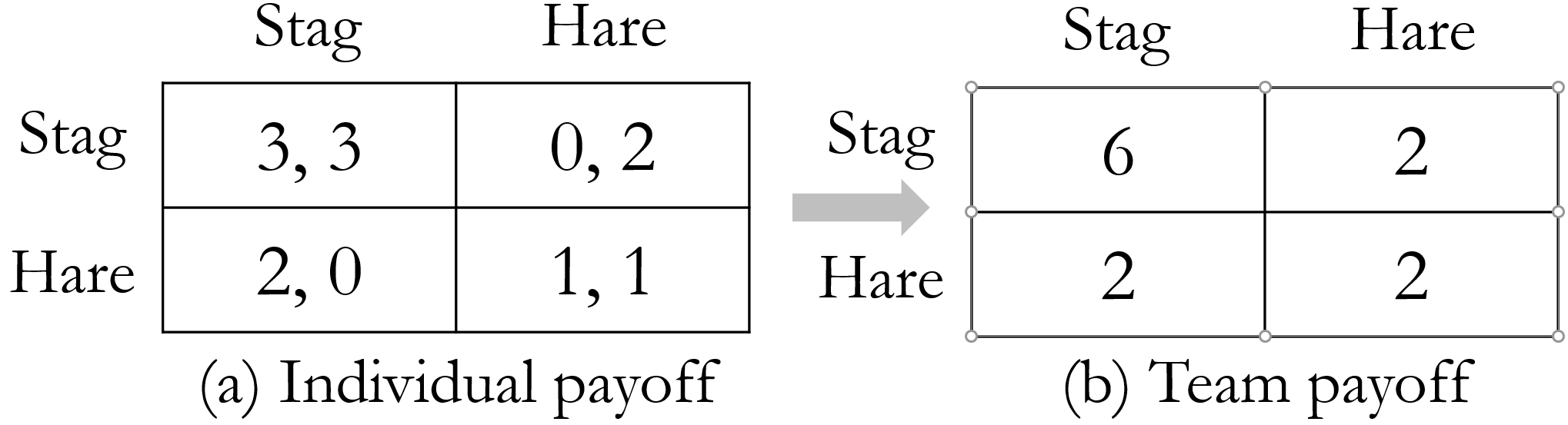}
 \caption{Payoff matrix of a simple $2\times2$ Stag Hunt game.}
 \label{fig:StagHuntExample}
\end{figure}

In the Stag Hunt game, only one action combination is socially optimal, which is when they both hunt stag, however there exist two possible NE outcomes: (hare, hare) and (stag, stag). With the later outcome, both players receive a payoff of $3$, which yields a largest social welfare of $6$ for the game. In this case, we cannot switch to any other outcome and make at least one party better off without making anyone worse off. 

Note that although (stag, stag) is the best solution for all players, neither of them would unilaterally deviate from the suboptimal NE (hare, hare) to improve its payoff since individual payoff would drop from $1$ to $0$. Because of that, players can often get stuck in an inferior equilibrium. Thus, if the players could jointly agree on what to do, then it makes sense for them to choose a payoff-maximising strategy at the (stag, stag) solution. This is a commons dilemma situation in many real-world games (i.e., businesses), in which a large number of players have to make their decisions at the same time. Each decision is often made upon a limited set of information and it is not always possible for them to coordinate their decisions or jointly plan strategy before playing the game. 

\begin{figure}[!t]
  \centering
  \begin{subfigure}[b]{0.493\linewidth}
    \includegraphics[width=\linewidth]{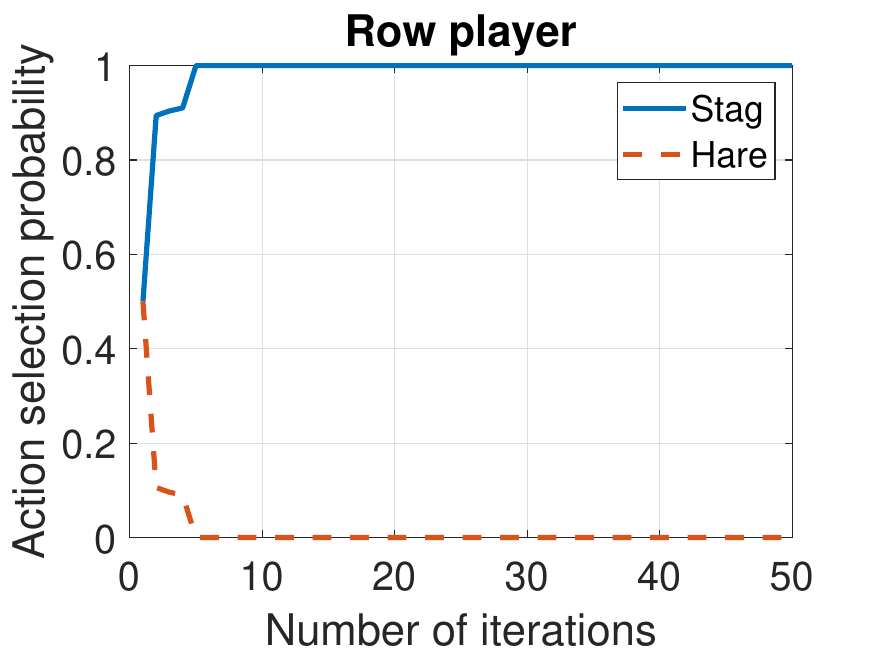}
    % \caption{Coffee.}
  \end{subfigure}
  \begin{subfigure}[b]{0.493\linewidth}
    \includegraphics[width=\linewidth]{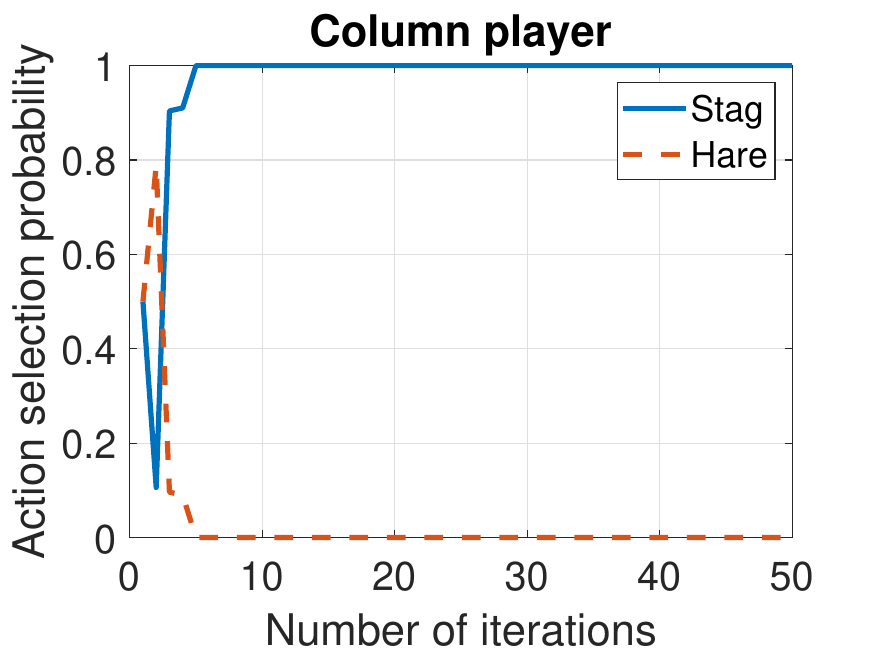}
    % \caption{More coffee.}
  \end{subfigure}
  \caption{Evolution of action selection probabilities of all players versus iterations in the Stag-Hunt game.}
  \label{fig:StagHare}
\end{figure}

% We first demonstrate the performance of our approach in achieving the social optimum outcome of the well-known $2\times2$ Stag Hare game illustrated previously in Figure~\ref{fig:example}. 
Figure~\ref{fig:StagHare} demonstrated the performance of our method on the Stag Hunt game. In our simulation, both players running our proposed algorithm always converge to the same joint action (stag, stag), which is more socially desirable than the other PSNE (hare, hare). Meanwhile, the result of running the standard RM (LURM) randomly converges to one of the two Nash points.

% \begin{figure}[!h]
%  \centering
%  \includegraphics[width=0.5\linewidth]{figures/StagHare_rowplayer}
%  \caption{stag hunt game.}
%  \label{fig:example}
% \end{figure}

%%%%%%%%%%%%%%%%%%%%%%%%%%%%%%%%%%%%%%%%%%%%%%%%%%%%%%%%%%%%%%%%%%%%%%%%

\end{document}